\numberwithin{equation}{section}
\crefname{section}{§}{§§}
\Crefname{section}{§}{§§}
\newtheorem{theorem}{Theorem}
\def\0{{(0)}}
\def\1{{(1)}}
\def\2{{(2)}}
\def\<{\langle }
\def\>{\rangle }
\newcommand{\bea}{\begin{eqnarray}}
\newcommand{\eea}{\end{eqnarray}}
\newcommand{\be}{\begin{equation}}
\newcommand{\ee}{\end{equation}}
\newcommand{\ba}{\begin{align}}
\newcommand{\ea}{\end{align}}
  \let\over=\@@over \let\overwithdelims=\@@overwithdelims
  \let\atop=\@@atop \let\atopwithdelims=\@@atopwithdelims
  \let\above=\@@above \let\abovewithdelims=\@@abovewithdelims
\renewcommand\section{\@startsection {section}{1}{\z@}%
                                   {-3.5ex \@plus -1ex \@minus -.2ex}
                                   {2.3ex \@plus.2ex}%
                                   {\normalfont\large\bfseries}}
\renewcommand\subsection{\@startsection{subsection}{2}{\z@}%
                                     {-3.25ex\@plus -1ex \@minus -.2ex}%
                                     {1.5ex \@plus .2ex}%
                                     {\normalfont\bfseries}}
\newcommand{\beq}{\begin{equation}}
\newcommand{\eeq}{\end{equation}}
\newcommand{\beqa}{\begin{eqnarray}}
\newcommand{\eeqa}{\end{eqnarray}}
\newcommand{\beqar}{\begin{eqnarray*}}
\def\[{\big[}
\def\]{\big]}
\def\min{\text{min}}
\def\a{\alpha}
\def\b{\beta}
\def\G{\Gamma}
\def\a{{\alpha}}
\def\b{{\beta}}
\def\be{{\bar \epsilon}}
\def\CA{{\mathcal A}}
\def\CB{{\mathcal B}}
\def\CC{{\mathcal C}}
\def\CX{{\mathcal X}}
\def\SB{{\mathscr B}}
\def\+{{(+)}}
\def\-{{(-)}}
\def\0{{(0)}}
\def\1{{(1)}}
\def\2{{(2)}}
\def\3{{(3)}}
\def\4{{(4)}}
\def\5{{(5)}}
\def\n{{({\sf N})}}
\def\i{{\{i\}}}
\def\S{{\tt{S}}}
\def\PT{\text{PT}}
\def\PTt#1{{$\text{PT}_{{#1}}$}}  
\def\SPT#1{{\vec \S_\PT^{\,\{#1\}}}}
\def\k{{\tt{K}}}
\def\l{\lambda}
\def\ovx#1{\vec x^{\, \{#1\}}}
\def\ovy#1{\vec y^{\, \{#1\}}}
\def\N{{\sf N}}
\def\D{{\sf D}}
\def\R{{\sf R}}
\def\powerset{\mathscr{P}}
\def\polyindfont#1{\mathscr{#1}}
\def\pI{\polyindfont{I}}
\def\pJ{\polyindfont{J}}
\def\pK{\polyindfont{K}}
\def\ii{{\underline{i}}}
\def\jj{{\underline{j}}}
\def\kk{{\underline{k}}}
\def\pII{{\underline{\pI}}}
\def\pKK{{\underline{\pK}}}
\def\gI{\Gamma}
\def\gInz{{\tilde \Gamma}}
\def\SB#1{{\hat e}^{\, #1}}
\def\IB#1{{\hat f}^{\, #1}}
\def\KB#1{{\hat g}^{\, #1}}
\def\MStoI{{\bf M}}
\def\MStoK{\widetilde {\bf M}}
\def\IQfont#1{{\mathbf #1}}
\def\Sv#1{{\vec{\S}}^{\,(#1)}}  
\def\Q{{\IQfont{Q}}}
\def\I{{\tt I}}
\newcommand{\extr}[1]{\mathcal{E}_{#1}}
\definecolor{vecolor}{rgb}{0.7,0.3,0.9}
\def\RR{\mathbb{R}}
\def\shadeI{\cellcolor{blue!5}}
\def\shadeK{\cellcolor{red!5}}
\def\sref#1{\textsection\ref{#1}}
\title{\boldmath Holographic Entropy Relations Repackaged}
\author[a]{Temple He}
\author[b]{, Matthew Headrick}
\author[a]{, Veronika E.\ Hubeny}
\affiliation[a]{Center for Quantum Mathematics and Physics (QMAP)\\
Department of Physics, University of California, Davis, CA 95616 USA}
\affiliation[b]{Martin Fisher School of Physics, Brandeis University, Waltham MA 02453, USA}
\emailAdd{tmhe@ucdavis.edu}
\emailAdd{headrick@brandeis.edu}
\emailAdd{veronika@physics.ucdavis.edu}
\preprint{BRX-TH-6652}
\abstract{
We explore the structure of holographic entropy relations (associated with  `information quantities' given by a linear combination of entanglement entropies of spatial sub-partitions of a CFT state with geometric bulk dual).  
Such entropy relations can be recast in multiple ways, some of which have significant advantages. 
Motivated by the already-noted simplification of entropy relations when recast in terms of multipartite information, we explore additional simplifications when recast in a new basis, which we dub the K-basis, constructed from perfect tensor structures.  For the fundamental information quantities such a recasting is surprisingly compact, in part due to the interesting fact that entropy vectors associated to perfect tensors are in fact extreme rays in the holographic entropy cone (as well as the full quantum entropy cone).
More importantly, we prove that all holographic entropy inequalities have positive coefficients when expressed in the K-basis, underlying the key advantage over the entropy basis or the multipartite information basis.  
}
\begin{document} 
\maketitle
\flushbottom

\section{Introduction}

In recent years, entanglement has played an increasingly prominent role in holography.  Early hints that entanglement structure of a CFT state may elucidate the emergence of bulk spacetime in its dual description stemmed from the conjecture of Ryu and Takayanagi (RT) \cite{Ryu:2006bv} (and its covariant generalization by Hubeny-Rangamani-Takayanagi (HRT) \cite{Hubeny:2007xt}), which recasts entanglement entropy of a spatial region in terms of an extremal surface area in the bulk.\footnote{\
In particular, for any `geometric' CFT state whose bulk dual is characterized by a classical geometry (with arbitrary time dependence), the entanglement entropy of any spatial region $\CA$ is given by quarter-area of the smallest area extremal surface $\extr{\CA}$ homologous to $\CA$, namely $\S(\CA) = \frac{1}{4}\, \text{Area}(\extr{\CA})$.
}
Such `geometrization' of entanglement has been tremendously useful in gaining further insight into various crucial properties of entanglement entropy.
Moreover, this relation is intriguingly reminiscent of the relation between black hole entropy and its event horizon area, perhaps harking to a deeper principle yet to be fully appreciated, and recently motivated bolder conjectures concerning the link between entanglement and the geometry of spacetime in quantum gravity \cite{VanRaamsdonk:2010pw,Maldacena:2013xja}; see e.g.\ \cite{Rangamani:2016dms} for a review.

To probe this connection further, it is fruitful to subdivide the full system into multiple subsystems and consider the relations between the entanglement entropies of all possible combinations of these subsystems.  A particularly natural set of relations takes the form of an inequality between sums of entanglement entropies.  For example, given any two subsystems $\CA$ and $\CB$ of the boundary CFT, and denoting their union $\CA \cup \CB \equiv \CA\CB$, their entropies must satisfy the relation known as subadditivity (SA):
\begin{align}
	\S(\CA) + \S(\CB) \geq \S(\CA\CB)\ .
\label{e:SA}
\end{align}
This relation holds for any quantum system (in any state and any subdivision into $\CA$ and $\CB$ with Hilbert space factorization $\mathcal{H}= \mathcal{H}_{\CA} \otimes   \mathcal{H}_{\CB} \otimes  \cdots $),\footnote{\
Strictly speaking, the Hilbert space on the boundary CFT does not necessarily factorize, and a careful study requires Tomita-Takasaki theory. However, as nicely summarized in \cite{Witten:2018lha}, oftentimes one may still get the correct result by (incorrectly) assuming this simple factorization.
} and is equivalent to the positivity of mutual information, i.e.
\begin{align}
\I_2(\CA\!:\!\CB) \equiv \S(\CA) + \S(\CB) - \S(\CA\CB)
\label{e:MI}\ ,
\end{align}
which characterizes the total amount of correlation between $\CA$ and $\CB$.  
A stronger relation, pertaining to a subdivision into three subsystems labeled $\CA$, $\CB$, and $\CC$, known as strong subadditivity (SSA) is given by
\begin{align}
	\S(\CA\CB) + \S(\CB\CC) \geq  \S(\CB) + \S(\CA\CB\CC)\ ,
\label{e:SSA}
\end{align}
which is equivalent to the statement of monotonicity of mutual information under inclusion,  $\I_2(\CA\!:\!\CB\CC) \ge \I_2(\CA\!:\!\CB) $, and likewise holds universally.

However, it is even more interesting to consider entropy relations which are not satisfied universally for all quantum states, but are satisfied in any 
 `geometric state' of a holographic CFT\footnote{\
By a holographic CFT we mean any CFT that admits a higher-dimensional gravitational dual.  Within such a CFT, we define a `geometric state' as one whose dual is described in terms of classical bulk geometry. In particular, we work in the large central charge and large 't Hooft coupling limit, where the quantum and stringy effects are suppressed.
}
when the partitions in question are spatial regions (on a given Cauchy slice of the background spacetime on which the CFT lives).
The simplest example of such a relation is the monogamy of mutual information (MMI)
\begin{align}
\begin{split}
	\S(\CA\CB) + \S(\CA\CC) + \S(\CB\CC) \geq \S(\CA) + \S(\CB) + \S(\CC) + \S(\CA\CB\CC)\ ,
\label{e:MMI}
\end{split}
\end{align} 
which can be re-expressed as negativity of the tripartite information,
\begin{align}
\I_3(\CA\!:\!\CB\!:\!\CC) \equiv \S(\CA) + \S(\CB) + \S(\CC) - \S(\CA\CB) - \S(\CA\CC) - \S(\CB\CC) + \S(\CA\CB\CC) \ .
\label{e:TI}
\end{align}
Both SSA and MMI are relatively easy to prove holographically\footnote{\
For the static case this follows immediately from the definition of a minimal surface  \cite{Headrick:2007km,Hayden:2011ag} while in the covariant case it requires extra assumptions (such as the null energy condition) \cite{Wall:2012uf}.
In the case of SSA, these restricted proofs are substantially easier than the full proof \cite{Lieb:1973cp} pertaining to a general quantum system.
}
 for geometric states (in fact, using almost identical arguments, which belies their fundamental difference\footnote{\
Nevertheless, alternate proofs \cite{Hubeny:2018bri,Cui:2018dyq} utilizing a rather different geometrization of entanglement entropy in terms of \emph{bit threads} \cite{Freedman:2016zud} do reveal an important difference between the nature of the two inequalities despite their superficial similarities.  Although in the following we will not evoke these methods, it is noteworthy that the approach of  \cite{Cui:2018dyq}  offered hints at the utility of perfect tensor basis that we will be focusing on in the present work.
}). 

Further refinement into a larger number of subsystems  $\CA_1,\ldots,\CA_\N$ then yields further interesting holographic entropy relations.  
Such a study was initiated in \cite{Bao:2015bfa}, where the authors developed an algorithm for discovering new holographic entropy inequalities.  In the entropy space (defined below) the set of all inequalities delineates a \emph{holographic entropy cone}, which characterizes the restriction on entanglement structure of any physically allowed geometric state.  More recently this study was reinvigorated by the program originated in \cite{Hubeny:2018trv} and further developed in \cite{Hubeny:2018ijt}.   While  \cite{Bao:2015bfa} characterized the entropy cone using its extreme rays,\footnote{\
The extreme rays are the minimal set of rays whose convex hull is the cone.} the latter approach instead focuses on the hyperplanes in the entropy space, corresponding to information quantities such as \eqref{e:MI} and \eqref{e:TI}.  The full set of so-called primitive\footnote{\label{f:primitive}
A {\it primitive} information quantity $\Q$ is defined as one for which there exists a geometric state and a configuration of regions satisfying $\Q=0$ while simultaneously having a nonzero value for any other independent information quantity.  Physically, this ensures that the information quantity captures a form of correlation that can vanish and is independent of all the others in the arrangement.
} information quantities is dubbed the \emph{holographic entropy arrangement}, while the intersection of the half-spaces given by the sign-definite ones defines the \emph{holographic entropy polyhedron} which provides an explicit construction of the holographic entropy cone.\footnote{\
Strictly-speaking, the equivalence between the holographic entropy cone and the holographic entropy polyhedron for all geometric states presently remains a conjecture, albeit a strong one \cite{Hubeny:2018ijt}.  
}

However, the entropy space itself is exponentially large -- for $\N$ subsystems, we can form $\D=2^\N - 1$ independent entropies corresponding to the various composite subsystems, so the entropy space lives in $\RR^\D$.  This means that we should expect the polyhedron to have a huge number of facets. It is therefore extremely useful to take advantage of the inherent symmetries of the setup, which will simplify our considerations substantially.   In particular, the full structure of the holographic entropy arrangement, as well as the polyhedron itself, must be symmetric under permutation and purification symmetry, extensively discussed in \cite{Hubeny:2018ijt}. 
In the next few paragraphs we will briefly review both of these in turn, to pave the way for our main consideration, namely one of representing information quantities.

The obvious symmetry  of our constructs (arrangement and polyhedron) is the permutation of region names (since physics cannot depend on our naming conventions). 
As an illustrative example, consider the two information quantities \eqref{e:MI} and \eqref{e:TI}, namely  $\I_2(\CA\!:\!\CB)$  and  $\I_3(\CA\!:\!\CB\!:\!\CC)$. Each is manifestly symmetric in permuting their arguments.  Similarly, the natural generalization to $\N$ parties, known as multipartite information $\I_\N(\CA_1\!:\!\ldots\!:\!\CA_\N)$ (defined below in \eqref{e:ItofromS}) has a manifest $\mathbf{S}_\N$ symmetry.  However, in the case of e.g.\ $\N=3$ parties, $\I_2(\CA\!:\!\CB)$ does not have the full $\mathbf{S}_3$ symmetry, which means that in addition to SA in the form \eqref{e:SA} we will automatically also have two further versions of SA, obtained by replacing $\CA$  or  $\CB$  with $\CC$.  These three inequalities lie in the same symmetry orbit.\footnote{\label{f:nonprimitive}
A priori, in the 3-party case we could also consider another version of \eqref{e:SA} by combining regions, such as $\I_2(\CA\!:\!\CB\CC)\ge 0$ and its two inequivalent permutations; however, these quantities are not primitive, being sums of non-negative quantities (e.g. $\I_2(\CA\!:~\!\CB\CC)= \I_2(\CA\!:~\!\CB)+\I_2(\CA\!:~\!\CC)-\I_3(\CA\!:~\!\CB\!:~\!\CC)$) and so cannot vanish without the individual components vanishing.}
To describe the full arrangement structure, it then suffices to specify just a single representative for each orbit, and permute the labels to generate the remaining ones.  

A more subtle symmetry of both the arrangement and the polyhedron is the purification symmetry.  In particular, for $\N$ subsystems $\CA_1,\ldots,\CA_\N$, we define its \emph{purifier} as the complement subsystem $\CA_{\N+1} \equiv \left( \CA_1\cdots\CA_\N \right)^c$, so that the total state on $\CA_1\cdots\CA_{\N+1}$ is pure.\footnote{\
In our holographic setup, this assumes a pure state on the entire boundary spacetime on which the CFT lives.  For instance, in case of the thermofield double state describing an eternal Schwarzschild-AdS black hole, with $\CA_1,\ldots,\CA_\N$ located on just the right boundary, the purifier region $\CA_{\N+1}$ would contain the remainder of the right boundary space as well as the entire left boundary space.
}  The entanglement entropy then satisfies $\S(\CA_1\cdots\CA_{\N+1})=0$, and for any subpartition, characterized by $n=1, \ldots, \N$,
\begin{align}
\S(\CA_1\cdots\CA_{n})=\S(\CA_{n+1}\cdots\CA_{\N+1}) \ .
\label{e:Spurif}
\end{align}
We can now treat $\CA_{\N+1}$ on equal footing with the other $\N$ subsystems, and employ the larger permutation symmetry $\mathbf{S}_{\N+1}$.  However, even though this symmetry is to be viewed as fundamentally the same as the permutation symmetry, the information quantities it generates can take on more apparently distinct form since our entropy space does not manifest this full symmetry when expressed in the usual entropy basis.

As an illustrative example of generating new relations via purification, consider SA in the $\N=2$ case, and let $\CC\equiv \left(\CA\mathcal{B}\right)^c$ be the purifier.  Replacing $\S(\CB)=\S(\CA\CC)$ and $\S(\CA\CB)=\S(\CC)$ in \eqref{e:SA}, and then renaming $\CC$ back to $\CB$, generates the Araki-Lieb (AL) inequality
\begin{align}
	\S(\CA) + \S(\CA\CB)  \geq  \S(\CB)\ ,
\label{e:AL}
\end{align}
which is therefore likewise an $\N=2$ entropy inequality, and indeed holds universally for all quantum states.\footnote{\
Similarly, for $\N=3$, purifying SSA \eqref{e:SSA} with respect to $\CA$ and permuting the labels yields the weak monotonicity (WM), $\S(\CA\CB) + \S(\CB\CC) \geq  \S(\CA) + \S(\CC)$.  However, neither SSA nor WM are primitive for the same reason as explained in footnote \ref{f:nonprimitive}, and therefore do not form the facets of the holographic entropy polyhedron.}

Having explained the full $\mathbf{S}_{\N+1}$ symmetry of the holographic entropy arrangement (and correspondingly the polyhedron and the cone), let us briefly review what is known so far about these constructs.\footnote{\
We refer the reader to Appendix \ref{a:QinSIKbases} for the explicit forms of these entropy inequalities and information quantities.
}
For  $\N=2$, the holographic entropy cone, which lives in $\RR^3$, is specified by SA \eqref{e:SA} and two permutations of AL  \eqref{e:AL}, and so here it actually coincides with the full quantum entropy cone.
For the $\N=3$ cone (in $\RR^7$), in addition to the six primitive quantities uplifted from the $\N=2$ cone, we also have one MMI \eqref{e:MMI}, which itself is fully permutation and purification symmetric.  The $\N=4$ cone similarly consists of  the various uplifts of $\N=2$ and $\N=3$ relations, along with the requisite permutations and purifications, and contains no new entropy inequalities. However, the $\N=4$ holographic entropy {\it arrangement} \cite{Hubeny:2018ijt} contains several additional information quantities such as $\I_4$, though none of these are sign-definite (meaning there exist holographic configurations $\CA_1,\ldots,\CA_4$ for which $\I_4(\CA_1\!:\!\cdots\!:\!\CA_4)<0$ and other holographic configurations $\CA'_1,\ldots,\CA'_4$ for which $\I_4(\CA'_1\!:\!\cdots\!:\!\CA'_4)>0$).
The full holographic entropy cone is now also known for $\N=5$ \cite{Cuenca:2019uzx}, and consists of specific uplifts of the lower-$\N$ inequalities along with five new 5-party inequalities initially constructed by \cite{Bao:2015bfa}.\footnote{\
While \cite{Bao:2015bfa} was only able to provide an upper bound on the cone by proving the five new 5-party inequalities, but unable to realize all the extreme rays of this cone by explicit configurations (leading them to conjecture two further inequalities which would shrink the cone enough to realize all of its extreme rays), recently \cite{Cuenca:2019uzx} nailed the  $\N=5$ cone fully by explicitly realizing all extreme rays using the approach of \cite{Hubeny:2018trv,Hubeny:2018ijt}.  
Correspondingly, the two conjectured inequalities can be explicitly violated; we thank Xi Dong and Sergio Hernandez Cuenca for alerting us to their (independent) counter-examples.
} 
When written out as individual inequalities, the cone (which lives in $\RR^{31}$) now has 372 facets, but they are organized into just eight separate symmetry orbits \cite{Cuenca:2019uzx}, five of which correspond to the new inequalities.

Although SA and MMI have a natural quantum information theoretic interpretation, the five new $\N=5$ entropy inequalities, as well as the other (sign-indefinite) new information quantities, look rather more obscure and formidable, especially when written out explicitly in the entropy basis.  However, we have seen above that both SA and MMI simplify dramatically when re-expressed in terms of multipartite informations to just single-term expressions, $\I_2(\CA_i\!:\!\CA_j)\ge0$ and  $-\I_3(\CA_i\!:\!\CA_j\!:\!\CA_k)\ge0$, respectively.
In fact, a similar simplification occurs for the sign-indefinite information quantities.  Indeed, as proved in \cite{Hubeny:2018trv} for any $\N$, one nontrivial $\N$-party information quantity is precisely the multipartite information $\I_\N(\CA_1\!:\!\cdots\!:\!\CA_\N)$.
Moreover, for $\N=4$, two of the new information quantities \cite{Hubeny:2018ijt} have eight terms when expressed in the \emph{S-basis} (i.e.\ in terms of entanglement entropies), but only two terms when expressed in the \emph{I-basis} (i.e.\ in terms of multipartite information).  Most of the other information quantities likewise look substantially simpler in the I-basis than in the S-basis, though the precise reduction in the number of terms varies (cf.\ Appendix~\ref{a:QinSIKbases}).

This simplification indicates that there is something `nice' (or more fundamental) about the multipartite informations as opposed to the entanglement entropies of composite subsystems.  One reason highlighted in  \cite{Hubeny:2018ijt} was that of `balance', rooted in the UV structure of the expression.  In a local QFT, the entanglement entropy of any finite region diverges, whereas multipartite information of disjoint regions remains finite (since the individual UV divergences cancel out).\footnote{\
We define two regions to be disjoint when their closures don't intersect.
However, as the regions get closer to each other, their mutual information grows, and diverges when the regions touch (with their boundaries tangent somewhere).
}  In fact,  \cite{Hubeny:2018ijt} introduced a more refined version dubbed $\R$-balance,\footnote{\label{f:Rbalance}
An information quantity is $\R$-balanced if its expansion in the I-basis only contains $\I_n$'s with $n>\R$.
}
which provides a useful $\mathbf{S}_\N$-invariant organizational principle.  Nevertheless, given that the purifier of disjoint regions is necessarily adjoining to them, and therefore spoils the balance, this feature by itself cannot be the whole story.

However, above we have  seen another advantage of the $\I_n$'s, namely that they are manifestly permutation symmetric in their $n$ arguments.  Although a given  information quantity composed of multiple $\I_n$'s may lose some or all of the permutation symmetry, its substructure is better indicated when grouped into these permutation-symmetric components, as the geometrical relation of the facets depends on this substructure. Assuming this symmetry feature indeed underlies the simplification of the information quantities when expressed in the I-basis, it is then natural to wonder whether we can do better.  In particular, given that the full symmetry for the $\N$-party holographic entropy arrangement is the $\mathbf{S}_{\N+1}$  involving purifications as well as permutations, rather than just the permutation group $\mathbf{S}_\N$,  one might expect that an even better packaging should be one that explicitly takes advantage of this larger symmetry.  Specifically, we want to decompose the information quantities into substructures that treat all $\N+1$ parties (including the purifier) on equal footing.  One such particularly natural packaging, already utilized in \cite{Cui:2018dyq}, is in terms of the so-called perfect tensors (PT).  A perfect tensor (defined more accurately in \sref{ss:Kbasis}) describes a state correlating an even number of parties whose reduced density matrix for an any bipartition is maximally mixed within the PT substructure. As explained below, the collection of such structures can be used to construct a basis for the entropy space.\footnote{\
We want to thank Ning Bao for bringing the possible existence of such a basis to our attention.}  
We will call such a basis the \emph{K-basis} (named simply based on the notational nomenclature of our constructs) and explore the structure of information quantities when expressed in this basis.

It will turn out that, as expected, many of the information quantities (especially the ones which are more naturally obtained via purification) are much more compactly packaged in this basis.  However, while there remains a pronounced difference between the S-basis and the K-basis, there is less of a pronounced difference between the I-basis and the  K-basis  -- in fact, there is a close relation between them.   Indeed, some of the information quantities are still more compact in the I-basis than in the K-basis. Nevertheless, the K-basis has a major advantage compared to the I-basis, in that every sign-definite (non-negative) information quantity turns out to be expressed as a sum of terms with only non-negative coefficients in the K-basis! We will explicitly prove that this is always the case for all holographic entropy inequalities involving arbitrary number of parties.\footnote{\
An important assumption used in the proof is expounded in footnote~\ref{f:metric_cone}.
} 

The plan of the paper is as follows.  
In  \sref{s:setup} we will explain the setup and notation, and specify the three mentioned bases: S-basis, I-basis, and  K-basis.
We will then discuss in \sref{s:infoQ} the information quantities in all three bases, where we exemplify the recasting of various information quantities for $\N=3,4,5$ in these bases. We will also observe some general properties of the K-basis, but will relegate the proofs of such properties to \sref{positivity}.
Next, in  \sref{positivity}, we will prove that the holographic entropy inequalities in the new K-basis only involve positive linear combinations of the basis elements, as well as the fact the K-basis elements are all proportional to extreme rays. Finally, we will discuss future directions in \sref{discussion}.
We collect details of the various information quantities for $\N=4,5$ in Appendix~\ref{a:QinSIKbases}. In Appendix~\ref{a:contraction}, we prove a weaker form of Theorem~\ref{positivethm}, but this time by using the \emph{method of contraction} introduced in \cite{Bao:2015bfa} in order to demonstrate the method's utility.
%

\section{Entropy Space in S, I, and K Bases}
\label{s:setup}

Consider an $\N$-party system, which can be expressed as an $(\N+1)$-party pure state. We label the parties as $\CA_1,\ldots,\CA_{\N}$, and use $\CA_{\N+1}$ to denote the purification system.  When expressing our quantities using the standard S-basis in \sref{ss:Sbasis} or the I-basis in \sref{ss:Ibasis}, however, we will not evoke the purifier $\CA_{\N+1}$ explicitly; it will make its appearance only in the K-basis in \sref{ss:Kbasis}.

The traditional way to characterize the entanglement structure of such a system is to specify the entanglement entropy of each subsystem.  
Out of the collection $\{ \CA_i \}$ with $i \in [\N]\equiv \{1,2,\ldots,\N\}$, we can form 
$\D = 2^\N - 1$ independent systems by grouping these together in all possible ways.
Following the terminology of \cite{Hubeny:2018trv,Hubeny:2018ijt}, 
we will call the `atomic' indecomposable partitions like $\CA_i$ {\it monochromatic} subsystems, and composite ones  like $\CA_i\CA_j$ {\it polychromatic} subsystems (which also include the monochromatic ones as a special case).
For each polychromatic subsystem we can define the corresponding entanglement entropy; 
to simplify notation, we will use a shorthand to denote the entanglement entropy of a polychromatic subsystem by a subscript identifying the parties involved, e.g. 
$\S(\CA_{i})\equiv \S_{i}$, $\S(\CA_{i}\CA_{j}) \equiv  \S_{ij}$, etc.
When we wish our expressions to be even more compact, we will denote a generic polychromatic subsystem by a collective index $\pI \in \powerset([\N]) \setminus \emptyset$ (where $\powerset$ denotes the power set) ranging over all non-empty subsets of $[\N]$, and correspondingly denote its entropy simply as $\S_\pI$.\footnote{\label{f:purifier_index}
Later (particularly in the context of the K-basis) where we consider the $(\N+1)$-party pure state and treat the purifier on equal footing, we will use analogous indices for the monochromatic and polychromatic systems, but underlined to emphasize the difference in the context, namely $\ii=1,\ldots,\N+1$ and $\pII$ having the correspondingly extended range. }

The space of all $\D$ entanglement entropies is called the {\it entropy space}.
Given a specified quantum system, namely the full density matrix and a particular decomposition into $\N$ parties, its entanglement structure is characterized by the set $\{ \S_\pI \} $, which corresponds to a point in the entropy space; it is however more useful to think of it as a vector (from the origin to the specified point), the so-called {\it entropy vector}.  For an $\N$-party system, we will denote its entropy vector by $\Sv{\N}$.

A given vector, such as  $\Sv{\N}$, can be expressed in various bases, which describe the entanglement structure and other quantities of interest in terms of different constructs.  Three particularly convenient ones are:
\begin{itemize}
\item S-basis, which uses entanglement entropies of polychromatic subsystems
\item  I-basis, which uses multipartite informations between monochromatic subsystems
\item K-basis, which uses perfect tensor structures between monochromatic subsystems plus the purifier
\end{itemize} 
We now detail each of these in turn, before turning to describe information quantities expressed in these bases in \sref{s:infoQ}.

\subsection{S-Basis}
\label{ss:Sbasis}

The most obvious basis for the entropy space is the entropy basis itself, which we call the S-basis.  This is the collection ordered by size of the polychromatic subsystem (and numerically within each):
\begin{align}\label{Sbasis}
	\Sv{\N} \equiv \left.\left\{\left\{\S_i\right\},\; \left\{\S_{ij}\right\},\;\ldots,\; \S_{12\cdots \N} \right\}\right|_{i,j,\ldots=1,\ldots,\N} \ ,
\end{align}
where $i < j$ and the subscripts are numerically ordered. 
As presaged above, the total number of components
of $\Sv{\N}$ is $\D=2^\N-1$.\footnote{\
For each party $i=1,\ldots,\N$ a given polychromatic system can either include it or not include it, which gives $2^\N$ possible combinations; however, the one with no party being included is disallowed and so we are left with $\D=2^\N-1$ possibilities.
Another way to see this is to use the grouping of \eqref{Sbasis} and 
 simple combinatorics:
$
	\sum_{r=1}^{\N}\binom{\N}{r} = 2^{\N}-1
$.
}
For instance, for $\N=3$, the entropy vector is
\begin{align}\label{Sbasis3}
	\Sv{3} = \left\{\S_1,\;\S_2,\;\S_3,\;\S_{12},\;\S_{13},\;\S_{23},\; \S_{123} \right\}\ .
\end{align}
We can expand this in terms of unit vectors 
$\SB{\pI}$   
pointing along the corresponding $\S_\pI$ axes: 
\begin{align}\label{Sbasis3ext}
	\Sv{3} = \sum_{\pI} \S_\pI \, \SB{\pI} 
	= \S_1 \, \SB{1} + \S_2 \,  \SB{2} + \S_{3}  \, \SB{3} + \S_{12}  \, \SB{12} + \S_{13}  \, \SB{13} + \S_{23} \,  \SB{23} + \S_{123}  \, \SB{123} \ ,
\end{align}
where $\SB{\pI}$ is a basis vector in the $7$-dimensional entropy space with $1$ in the $\pI$-th component and 0 elsewhere.

\subsection{I-Basis}
\label{ss:Ibasis}

We can repackage the information in $\Sv{\N}$ by using the fact that there is a bijection between the polychromatic entropies $\S_\pI$ and multipartite informations with monochromatic arguments, $\I_n(\CA_1\!:\!\cdots\!:\!\CA_n)$.  In particular, denoting the cardinality (i.e.\ number of monochromatic components) of a polychromatic subsystem $\pI$ as $n_\pI$, we have the following conversion between $\I$ and $\S$:
\begin{equation}
\I_\pI=\sum_{\pK\subseteq\pI} (-1)^{1+n_\pK} \, \S_\pK
\qquad \Longleftrightarrow \qquad   
\S_\pI=\sum_{\pK\subseteq\pI} (-1)^{1+n_\pK} \,  \I_\pK \ ,
\label{e:ItofromS}
\end{equation}
where we have used the shorthand $\I_\pI$ to indicate the $n_\pI$-partite information with arguments given by the $n_\pI$ monochromatic subsystems composing $\pI$.
If we use a slightly less compact but more explicit shorthand\footnote{\
The arguments are relegated to the subscript and we drop the original subscript $n$ on $\I_n$ since this can be read off by simply counting the number of arguments.  There is no separation between the subscripts since for our basis we only consider multipartite information between monochromatic subsystems (and we can work in base $\N+1$ to have a single character for each $i$).  On the other hand, when we do wish to evoke multipartite information with polychromatic systems as its arguments, we will simply revert to the original non-shorthand notation.
}  
$\I_2(\CA_i:\CA_j) \equiv \I_{ij}$,  $\I_3(\CA_i: \CA_j: \CA_k) \equiv \I_{ijk}$, 
etc., then written out explicitly  for $n_\pI = 1,2,3,4$,  we have
\small
\begin{align}
\begin{split}
& \I_i 
	= \S_i  \\ 
& \I_{ij} 
	= \S_i+\S_j-\S_{ij} \\ 
& \I_{ijk} 
	= \S_i+\S_j+\S_k-\S_{ij}-\S_{ik}-\S_{jk}+\S_{ijk} \\ 
&  \I_{ijkl} 
	= \S_i+\S_j+\S_k+\S_l-\S_{ij}-\S_{ik}-\S_{il}-\S_{jk}-\S_{jl}-\S_{kl}+\S_{ijk}+\S_{ijl}+\S_{ikl}+\S_{jkl}-\S_{ijkl} \ ,
\end{split}
\label{e:ItoS4}
\end{align}
\normalsize
as well as the identical set of equations going the other way, with each $\S$ replaced by the corresponding $\I$ and vice versa.

We can then re-express the entropy vector in the I-basis as the ordered collection of $\I_\pI$'s.  For example, for $\N=3$ we have, analogously to \eqref{Sbasis3},
\begin{align}\label{Ibasis3}
	\Sv{3} = \left\{\I_1,\;\I_2,\;\I_3,\;\I_{12},\;\I_{13},\;\I_{23},\;  \I_{123}
	\right\}_{[\I]}  \ .
\end{align}
However, when writing it in this form, we have implicitly switched to the I-basis (indicated by the subscript ${[\I]}$); in particular, the $\I$'s are not coefficients of the $\SB{\pI}$'s specifying the S-basis, but rather coefficients for new basis vectors 
 $\IB{\pI}$, i.e.
\begin{align}\label{Ibasis3ext}
	\Sv{3} =  \sum_{\pI} \I_\pI \, \IB{\pI} \ ,
	\end{align}
where $\IB{\pI}$ denotes the I-basis vectors in the $\D$-dimensional entropy space.

Corresponding to the passive transformation \eqref{e:ItoS4} between the coefficients (or rather its inverse, which in this case is identical) that is expressible using a conversion matrix $\MStoI$
\begin{equation}
\S_{\pI} = \I_\pK \, \MStoI_{\pI}^{\ \pK}
\label{e:MStoIdef}
\end{equation}	
(with $\sum_\pK$ implied),
we can consider the active transformation on the basis vectors: 
\begin{align}
\IB{\pI} = \MStoI^{\ \ \pI}_{\pK} \, \SB{\pK} \ .
\end{align}
In other words, we convert the basis vectors with the transpose matrix of that converting the coefficients.
For example,  for $\N=2$ and $\N= 3$, we have the matrix $\MStoI$ in \eqref{e:MStoIdef}
\begin{equation}
\MStoI_{(\N=2)} =  
\begin{pmatrix}
1 & 0 & 0 \vspace{-0.2cm}  \\
0 & 1 & 0 \vspace{-0.2cm}  \\
1 & 1 & -1
\end{pmatrix}
\qquad {\rm and} \qquad
\MStoI_{(\N=3)} =  
\begin{pmatrix}
 1 \ & \ 0 \ & \ 0 \ & 0 & 0 & 0 & 0  \vspace{-0.2cm} \\
0 \ & \ 1 \ & \ 0 & 0 & 0 & 0 & 0 \vspace{-0.2cm} \\
0 & \ 0 \ & \ 1 & 0 & 0 & 0 & 0 \vspace{-0.2cm} \\
1 & \ 1 &\ 0 & -1 & 0 & 0 & 0 \vspace{-0.2cm} \\
1 & \ 0 & \ 1 & 0 & -1 & 0 & 0 \vspace{-0.2cm}  \\
0 & \ 1 & \ 1 & 0 & 0 & -1 & 0 \vspace{-0.2cm} \\
1 &\  1 &\  1 & -1 & -1 & -1 & 1 
\end{pmatrix}\ ,
\label{e:MStoI2}
\end{equation}
which  for $\N=2$ gives
$\IB{1}=\SB{1}+\SB{12}$, $\IB{2}=\SB{2}+\SB{12}$, and $\IB{12}=-\SB{12}$ when re-expressed in the S-basis.

At this stage, the S-basis and the I-basis appear rather symmetric, since 
$\MStoI = \MStoI^{-1}$.
However, compared to the polychromatic entanglement entropies themselves, the multipartite informations have a number of useful features, some of which were explored and utilized in \cite{Hubeny:2018ijt}.
In particular, as evident from \eqref{e:ItofromS}, the $n_\pI$-partite information $\I_\pI$ groups together $2^{n_\pI}-1$ polychromatic entanglement entropies (composed precisely out of all the monochromatic subsystems evoked by $\pI$) in a fully ($\mathbf{S}_{n_\pI}$) permutation symmetric fashion, whereas in the S-basis, this symmetry is upheld only by the (possibly smaller) combination of terms $\S_\pK$ with $\pK \supseteq \pI$. Furthermore, in the terminology of \cite{Hubeny:2018ijt}, the $\I_\pI$ are $(n_\pI-1)$-balanced, describing the increasingly tamer UV properties under increasing $n_\pI$.  In particular, for disjoint regions, every $\I_\pI$ is UV-finite for $n_\pI \ge 2$.  

Moreover, the $\I_\pI$'s have a number of interesting and useful recursion relations, which allow us to easily determine how these quantities change with the adding or subtracting of a system.  For example, if one of the arguments $\CA_i$ comprising our $\pI$ subsystem ($\{i\}\subset \pI$) is taken to vanish, i.e.\ $\CA_i = \emptyset$, so that we can take $\S_{\pK \cup \{i\}} = \S_\pK$ for any $\pK \nsupseteq \{i\}$, then each of the terms in $\I_\pI$  involving $\CA_i$ precisely cancels a corresponding one that does not include $\CA_i$, and therefore the multipartite information vanishes identically, i.e. $\I_\pI = 0$.

We can also consider what happens if we take the $(\N+1)$-party pure state and treat the purifier $\CA_{\N+1}$ on equal footing with the other $\CA_i$'s.  There is then a redundancy in how we write each term, since according to \eqref{e:Spurif}, for any $\pII$ (cf.\ footnote \ref{f:purifier_index}), we have $\S_\pII = \S_{\pII^c}$, where we defined the complement ${\pII^c} \equiv [\N+1] \setminus \pII$.
For $\N$ even, the purified expression then has opposite `parity', meaning terms with an odd number of monochromatic components purify to terms with an even number of components and vice versa, so that $(-1)^{n_\pII}= -(-1)^{n_{\pII^c}}$.   Since the multipartite information consists of terms where the signs alternate based on the number of components of the given term, all terms again cancel (and the last one vanishes by itself since $\S_{[\N+1]}=0$), so that
\begin{align}
\I_{\N+1}(\CA_i\!:\cdots:\!\CA_{\N+1}) =0 
\qquad {\rm for} \  \N\  {\rm even\ .}
\end{align}
On the other hand, when $\N$ is odd, the purified terms have the same parity as the original ones, so once we recast each term containing $\CA_{\N+1}$ into its purified form, the pairs of terms add constructively to yield
\begin{align}
\I_{\N+1}(\CA_1\!:\cdots:\!\CA_{\N+1}) = 2 \, \I_{\N}(\CA_1\!:\cdots:\!\CA_{\N}) 
\qquad {\rm for} \  \N\  {\rm odd\ .}
\end{align}
Therefore, for odd $\N$, not only is $\I_{\N}(\CA_1\!:\cdots:\!\CA_{\N})$ manifestly $\mathbf{S}_\N$ symmetric, but also $\mathbf{S}_{\N+1}$ symmetric  (i.e.\ permutation and purification symmetric).

The above observation suggests that the I-basis has partially nice properties under purifications:  the largest-rank multipartite information $\I_\N$ is purification symmetric for odd $\N$ though not for even $\N$.  This motivates us to devise a basis which will capture the purification symmetry for all $\N$, as well as for substructures with $n<\N$.  We will retain some structural distinction between even and odd $\N$, but in a subtler way that we describe next.

\subsection{K-Basis}
\label{ss:Kbasis}

Let us now introduce our new basis, the K-basis.\footnote{\
We choose our naming to parallel the S-basis and I-basis, namely based on the labels of the coefficients, which in this case we denote by $\k$.
}
As motivated in the Introduction, we wish to treat the purifier $\CA_{\N+1}$ on equal footing with the monochromatic regions $\CA_{\N}$, and choose building blocks that respect this symmetry.  
In this context, we modify our notation accordingly, using $\ii=1,\ldots,\N+1$ and $\pII \in \powerset([\N+1]) \setminus \emptyset$ labeling any non-trivial subset of $[\N+1]$.
However, instead of allowing all possible subsystems $\pII$, we will only use a subset of them (specified below) and denote those by a distinct index $\gI$ to emphasize this fact.
As we shall see, although each building block only retains a subgroup of $\mathbf{S}_{\N+1}$, it does not single out the purifier in any way.  The key feature will be to use entropy structures attained by perfect tensors.

 A $2s$-perfect tensor, which we denote by \PTt{2s}, is a $2s$-party pure state for any positive integer $s$ such that the reduced density matrix involving any $s$ parties is maximally mixed.
 Holographically, using the RT formula \cite{Ryu:2006bv}, one may envision a \PTt{2s} as a wormhole with $2s$ boundaries, each with a bottleneck of area $\frac{1}{4G_N}$. In the simple case where $s=1$ and we have a wormhole between $\CA_1$ and $\CA_2$, this is just a Bell pair with entanglement entropies\footnote{\ 
We are ignoring factors of $\log 2$, and adopt the convention that a Bell pair has one unit of entanglement.}
\begin{align}
	\S_1 = \S_2 = 1\ , \quad \S_{12}=0 \ .
\end{align}
When such a structure is embedded in a larger system, $\N>1$, the $\CA_1\CA_2$ subsystem is completely uncorrelated with the rest, so we have additionally, for any $\pII$ that does not include $\CA_1$ or $\CA_2$,
\begin{align}
	\S_{1\pII}= \S_{2\pII} = 1\ , \quad \S_{12\pII}= \S_{\pII}=0 \ .
\end{align}
For example, for $\N=3$, a $\{12\}$ PT (cf.\ left diagram of Fig.~\ref{f:PTfig}) would have the entropy vector in the S-basis (cf.\ \eqref{Sbasis3}) 
\begin{align}
	\SPT{12} = \{1,1,0,0,1,1,0\} \ .
\end{align}

More generally, we may define any \PTt{2s} involving the parties $\CA_1,\ldots,\CA_{2s}$, where $2s \leq \N+1$, to have its entropy vector $\SPT{1\cdots 2s}$  given by
\begin{align}\label{e:SforK}
	\S_{{i_1}\cdots {i_t}} = \S_{{i_1}\cdots {i_t} \, \pII} = 
	\min \{ t , 2s-t \}\ ,
\end{align}
with  $i_1,\ldots,i_t \in \{1,\ldots,2s\}$ 
and $\pII$ lying in the complement, $i_1,\ldots,i_t \nsubseteq \pII$ (note that this expression also extends to $t=0$ where we take the left-hand-side to be simply $\S_\pII$, which vanishes); see Fig.~\ref{f:PTfig} for some examples of PTs.   
The data given in \eqref{e:SforK} is enough to specify the entropy vector completely.  Moreover, since a \PTt{2s} involving parties $\G = \{1,\ldots,2s\}$ is a $2s$-party pure state,  we can use purification symmetry within $\G$ itself to obtain $\S(\bar \CX) = \S(\CX)$ (already implied by \eqref{e:SforK}), where $\CX \subseteq \Gamma$ is some subset of the $2s$ parties and $\bar \CX$ denotes the complement of $\CX$ in $\G$.\footnote{\
For convenience, we denote this restricted complement by an overbar, 
$\bar \CX \equiv \G \setminus \CX$, 
to distinguish it from the complement in the full $[\N+1]$, which is indicated by ${\CX}^c \equiv [\N+1] \setminus  \CX$.
} 

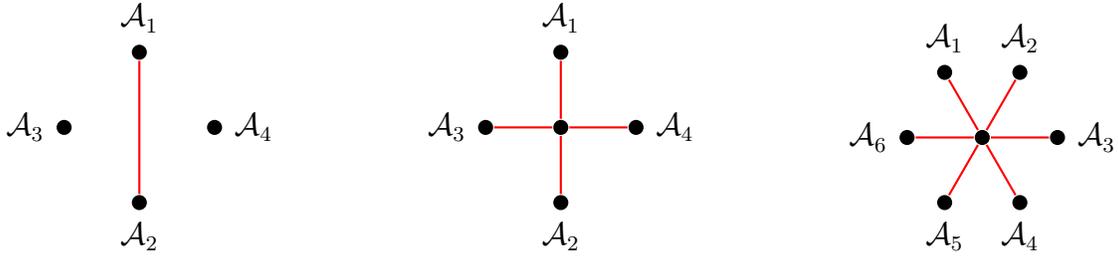
\begin{figure}
\centering
    \begin{tikzpicture}[
       thick,
       acteur/.style={
         circle,
         fill=black,
         thick,
         inner sep=2pt,
         minimum size=0.2cm
       }
     ] 
	\node (a1) at (0,2) [acteur,label=$\CA_1$]{}; 
      \node (a2) at (0,0) [acteur,label=below:$\CA_2$]{}; 
      \node (a3) at (-1,1) [acteur,label=left:$\CA_3$]{}; 
      \node (a4) at (1,1) [acteur,label=right:$\CA_4$]{};
      
      \draw[red] (a1) -- (a2);
\end{tikzpicture} \qquad\qquad     \begin{tikzpicture}[
       thick,
       acteur/.style={
         circle,
         fill=black,
         thick,
         inner sep=2pt,
         minimum size=0.2cm
       }
     ] 
	\node (a1) at (0,2) [acteur,label=$\CA_1$]{}; 
      \node (a2) at (0,0) [acteur,label=below:$\CA_2$]{}; 
      \node (a3) at (-1,1) [acteur,label=left:$\CA_3$]{}; 
      \node (a4) at (1,1) [acteur,label=right:$\CA_4$]{};
      \node (a5) at (0,1) [acteur,label={}] {};
      
       \draw[red] (a1) -- (a5);
       \draw[red] (a2) -- (a5);
       \draw[red] (a3) -- (a5);
      	\draw[red] (a4) -- (a5);
\end{tikzpicture} \qquad\qquad \begin{tikzpicture}[
       thick,
       acteur/.style={
         circle,
         fill=black,
         thick,
         inner sep=2pt,
         minimum size=0.2cm
       }
     ] 
	\node (a1) at (-1/2,{1+sqrt(3)/2}) [acteur,label=$\CA_1$]{}; 
      \node (a2) at (1/2,{1+sqrt(3)/2}) [acteur,label=:$\CA_2$]{}; 
      \node (a3) at (1,1) [acteur,label=right:$\CA_3$]{}; 
      \node (a4) at (1/2,{1-sqrt(3)/2}) [acteur,label=below:$\CA_4$]{};
      \node (a5) at (-1/2,{1-sqrt(3)/2}) [acteur,label=below:$\CA_5$]{}; 
      \node (a6) at (-1,1) [acteur,label=left:$\CA_6$]{}; 
      \node (a7) at (0,1) [acteur,label={}] {};
      
       \draw[red] (a1) -- (a7);
       \draw[red] (a2) -- (a7);
       \draw[red] (a3) -- (a7);
      	\draw[red] (a4) -- (a7);
      	\draw[red] (a5) -- (a7);
      	\draw[red] (a6) -- (a7);
\end{tikzpicture}
\caption{
We represent the entropy structure of various PTs using discrete graphs. 
The boundary vertices denote the monochromatic parties, and the set of all boundary vertices in each graph is denoted by $\G$.
Each edge represents one unit of entanglement, and the entropy of $\CX \subseteq\G$ is the minimum number of edges crossing from $\CX$ to $\G\setminus\CX$. \textbf{Left:} This is the Bell pair entanglement between $\CA_1$ and $\CA_2$ in a 3-party system ($\CA_4$ is the purifier). \textbf{Middle:} This is the graph representation for the \PTt{4} in a 3-party system. \textbf{Right:} This is the graph representation for the \PTt{6} in a 5-party system ($\CA_6$ is the purifier).  
}
\label{f:PTfig}
\end{figure}

We now claim that the PT structures form a basis for the entropy space. The proof is relegated to Section~\ref{positivity}, but observe that given our $\N$-party system, there are $\binom{\N+1}{2s}$ possible \PTt{2s}'s. It follows the total number of possible PTs is 
\begin{align}
	\sum_{s=1}^{\lceil \N/2 \rceil} \binom{\N+1}{2s} = 2^{\N}-1\ .
\end{align}
Since this matches the dimension $\D$ of the entropy space, the PTs form a basis for the entropy space if they are all linearly independent. This turns out to be the case (essentially because each PT embodies a distinct structure of entanglement), and we call this basis the K-basis.\footnote{\label{f:PTpositivity}
One may wonder whether it is always possible to decompose the entropy vector of an arbitrary holographic state as a {\it positive} sum of the entropy vectors of various PTs. If this were possible, it would then be suggestive that the PTs form the fundamental building blocks for the underlying entanglement structure of \emph{any} holographic state. Unfortunately, this is in fact not always possible, since as we explain below, for large enough $\N$ such a linear combination can involve negative coefficients.  An explicit example is the extreme ray given by the five-boundary wormhole in Fig.~2 of \cite{Bao:2015bfa}.} 

The basis vectors given by the PTs are labeled by $\KB{\gI} \equiv \SPT{\gI}$, where the collective index $\gI$, analogous to $\pI$ for the S and I bases, is built out of even-numbered polychromatic subsystems $\pII$, with $n_\gI=2s$ for a \PTt{2s}.
For example, we can capture the information of the $\N=3$ entropy vector in the K-basis (cf.\ \eqref{Sbasis} for S-basis and \eqref{Ibasis3} for I-basis) via
\begin{align}\label{k3}
	\Sv{3}= \left\{\k_{12}^{(3)},\; \k_{13}^{(3)},\; \k_{14}^{(3)},\; \k_{23}^{(3)},\; \k_{24}^{(3)},\; \k_{34}^{(3)},\; \k_{1234}^{(3)}\right\}_{[\k]} \ ,
\end{align}
which consists of $\binom{4}{2} = 6$ Bell pairs (\PTt{2}) between the parties $\CA_1,\ldots,\CA_4$ as well as a \PTt{4} involving all four parties.
Written explicitly, we have 
\begin{align}
\label{k3explicit}
	\Sv{3}  = \k^{(3)}_{12} \, \KB{12} + \k^{(3)}_{13} \, \KB{13} + \k^{(3)}_{14} \, \KB{14} + \k^{(3)}_{23} \, \KB{23}  + \k^{(3)}_{24} \, \KB{24} + \k^{(3)}_{34} \, \KB{34} + \k^{(3)}_{1234} \, \KB{1234}\ .
\end{align}
Similarly, a general entropy vector in the K-basis would then be expressed as
\begin{align}
\label{Kbasis}
	\Sv{\N} = \sum_\gI \k_\gI^\n \, \KB{\gI} \ .
\end{align}

As in \sref{ss:Ibasis}, we can convert between the $\S$'s and $\k$'s, as well as between the basis vectors $\SB{\pI}$ and $\KB{\gI}$ via a conversion matrix 
$\MStoK$ and its transpose.  
In particular, 
\begin{equation}
\S_{\pI} = \k_\gI^\n \,  \MStoK_{\pI}^{\ \gI} 
\qquad \Longleftrightarrow \qquad
\KB{\gI} =  \MStoK^{\ \gI}_{\pI} \, \SB{\pI} \ .
\label{e:StoK}
\end{equation}	
For example, for $\N=3$, by equating the entropy vector \eqref{k3explicit} with \eqref{Sbasis3ext} and using \eqref{e:SforK}, we obtain the expressions\footnote{\ 
Notice that \eqref{entropy3} holds only for $\N=3$; for other $\N$, $\S_\gI$ is a different linear combination of $\k_\gI^{\n}$ since there are further PT structures that contribute to the entanglement entropy of the given polychromatic system. This motivates using the superscript $\n$ in $\k_\gI^\n$.}
\begin{align}
\label{entropy3}
\begin{split}
	\S_i &= \sum_{\jj \not= i}^4 \k_{i\jj}^{(3)} + \k^{(3)}_{1234}\ , \quad \S_{ij} = \sum_{\kk \not= i,j}^4 \left(\k^{(3)}_{i\kk} + \k^{(3)}_{j\kk}\right) + 2\k^{(3)}_{1234}\ . \\
\end{split}
\end{align}
These two relations are sufficient to determine all components of $\vec \S^{\,\3}$.
In particular,  for $\N=2$ and $\N=3$, they imply
\begin{equation}
\MStoK_{(\N=2)} =  
\begin{pmatrix}
1 & 1 & 0 \vspace{-0.2cm}  \\
1 & 0 & 1 \vspace{-0.2cm}  \\
0 & 1 & 1
\end{pmatrix}
\qquad {\rm and} \qquad
\MStoK_{(\N=3)} =  
\begin{pmatrix}
1 & 1 & 1 & 0 & 0 & 0 & 1  \vspace{-0.2cm} \\
1 & 0 & 0 & 1 & 1 & 0 & 1 \vspace{-0.2cm} \\
0 & 1 & 0 & 1 & 0 & 1 & 1 \vspace{-0.2cm} \\
0 & 1 & 1 & 1 & 1 & 0 & 2 \vspace{-0.2cm} \\
1 & 0 & 1 & 1 & 0 & 1 & 2 \vspace{-0.2cm}  \\
1 & 1 & 0 & 0 & 1 & 1 & 2 \vspace{-0.2cm} \\
0 & 0 & 1 & 0 & 1 & 1 & 1 
\end{pmatrix}\ .
\label{e:MStoI2}
\end{equation}
We can also convert directly between the K-basis and the I-basis (which is in some sense easier).  For example, for $\N=3$, this yields
\begin{align}
\begin{split}
	\k_{ij}^{(3)} = \frac{1}{2} \, \I_{ij}\ , 
	\quad \k_{1234}^{(3)} = -\frac{1}{2} \, \I_{123} \ ,
\end{split}
\end{align}
though the conversion is more complicated for larger $\N$.  We will utilize some of this structure in \sref{positivity}.  In particular, it is easy to see that for any \PTt{2s} denoted by $\gI$, which can be represented by a star graph with $n_\gI=2s$ legs, the multipartite information vanishes for any subsystem that includes at most half of the system $\gI$, 
\begin{equation}\label{eq:I_PT_half_vanish}
\I_\pII =0 \qquad \forall \ n_\pII \le s \ .
\end{equation}	
Moreover, when $n_\pII = n_\gI$, the only non-vanishing  $\I_\pII$ is precisely the one evoking the given PT, $\pII = \gI$.  For example, in the left panel of Fig.~\ref{f:PTfig}, $\I_{12}=2$ while all other $\I_{\ii \jj}=0$.

Having considered all three bases, let us sharpen our geometrical intuition for their interplay by considering the simplest case, namely $\N=2$, where the entropy space is 3-dimensional, and hence easy to visualize.
\begin{figure}[htbp]
\begin{center}
\includegraphics[width=3in]{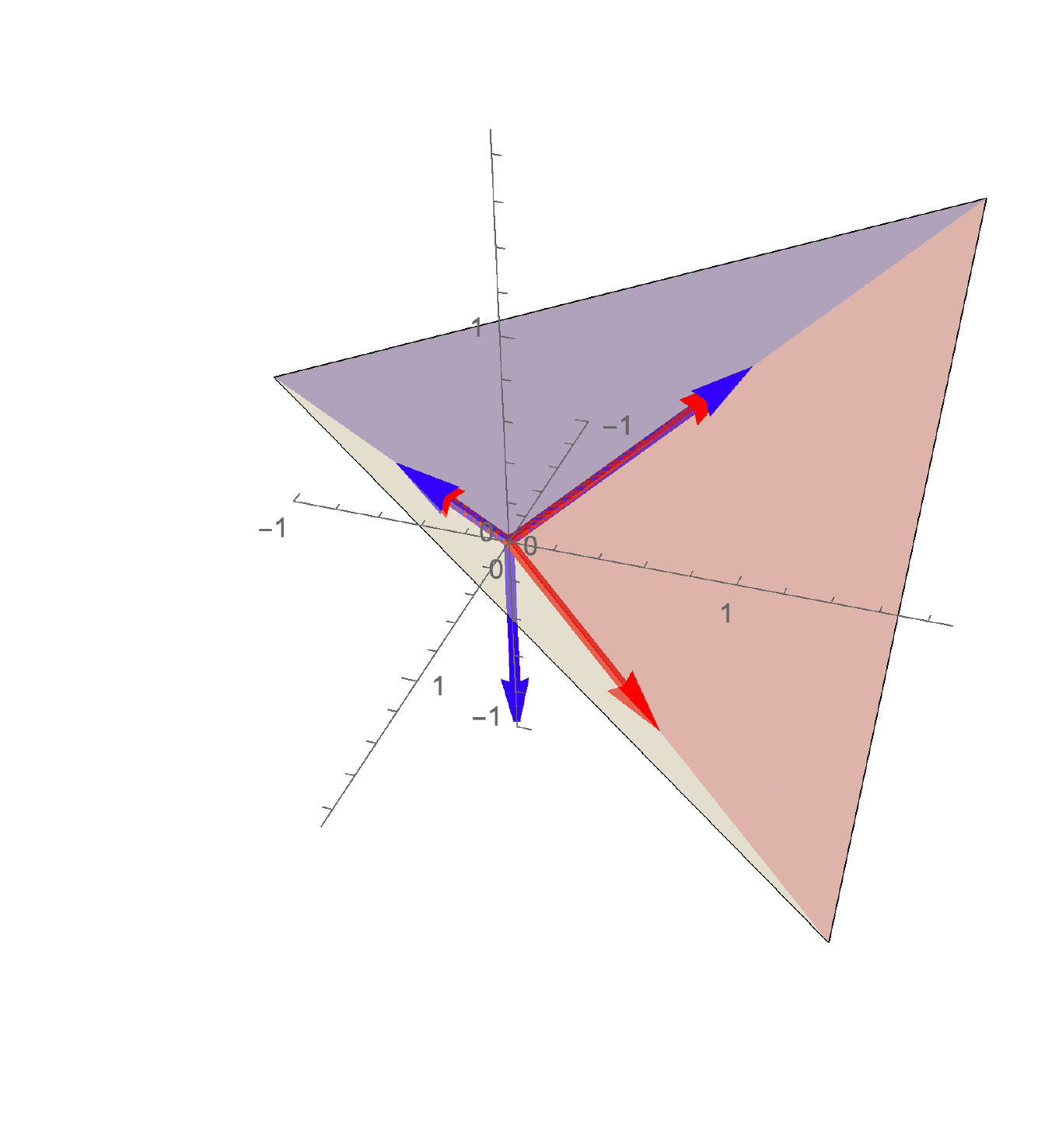}
\put(-165,50) {\makebox(20,20) {{$\S_1$}}}
\put(-20,95) {\makebox(20,20) {{$\S_2$}}}
\put(-130,210) {\makebox(20,20) {{$\S_{12}$}}}
\put(-158,130) {\makebox(20,20) {{\color{blue}{$\IB{1}$}}}}
\put(-70,135) {\makebox(20,20) {{\color{blue}{$\IB{2}$}}}}
\put(-120,65) {\makebox(20,20) {{\color{blue}{$\IB{12}$}}}}
\put(-75,80) {\makebox(20,20) {{\color{red}{$\KB{12}$}}}}
\put(-140,140) {\makebox(20,20) {{\color{red}{$\KB{13}$}}}}
\put(-90,150) {\makebox(20,20) {{\color{red}{$\KB{23}$}}}}
\caption{Entropy space for $\N=2$ with axes given by the 3 $\S_\pI$'s as labeled.  (The corresponding basis vectors $\SB{\pI}$, given by unit vectors pointing along the $\S_\pI$ axes, are not drawn to avoid clutter.)  The axes for the I-basis $\IB{\pI}$  (blue arrows) and K-basis $\KB{\G}$ (red arrows) are drawn, as well as the entropy cone delimited by the shaded facets (and defined by the extreme rays, which coincide with the K-basis vectors).}
\label{f:N2}
\end{center}
\end{figure}
This is illustrated in Fig.~\ref{f:N2}.  Two pairs of $\IB{\pI}$ and $\KB{\gI}$ in fact coincide, although the remaining pair does not.  Moreover, the K-basis vectors $\KB{\gI}$ are all proportional to the extreme rays of the holographic entropy cone (enclosed by the shaded facets in  Fig.~\ref{f:N2}).  For larger $\N$, we will prove in Theorem~\ref{extremalthm} of Section~\ref{positivity} that the basis vectors $\KB{\gI}$ are still proportional to the extreme rays, but there are additional extreme rays which are given by multi-term linear combinations of several $\KB{\gI}$'s.  (It is for this reason that not every entropy vector in the entropy cone can be expressed as a positive combination of the perfect tensors, as noted in footnote \ref{f:PTpositivity}.)

\section{Information Quantities}
\label{s:infoQ}

An information quantity  $\Q$ is given by a linear combination (with integer coefficients) of subsystem entropies $S_\pI$.  To each information quantity we can associate a hyperplane $\Q=0$ in the entropy space.  For sign-definite information quantities, we choose the overall sign such that $\Q\ge0$, obtaining entropy inequalities like SA or MMI.  Each such inequality restricts the physical states to lie in the corresponding half-space, and the full set of inequalities then delineates the holographic entropy polyhedron.  Here we are interested in primitive information quantities (cf.\ footnote \ref{f:primitive}), and particularly the sign-definite ones corresponding to holographic entropy inequalities.

An information quantity can be expressed in any of our three bases of interest: 
\begin{equation}
\Q
=\sum_{\pI }\mu_{\pI}\,\S_{\pI}
=\sum_{\pI }\nu_{\pI}\,\I_{\pI}
=\sum_{\gI}\lambda_{\gI}\, \k_{\gI} \ ,
\label{eq:}
\end{equation}	
so $\Q$ is equivalently specified by the coefficients $\mu_{\pI}$ in the S-basis, or $\nu_{\pI}$ in the I-basis, or $\lambda_{\gI}$ in the K-basis. 
As we shall see via Theorem~\ref{positivethm} in the next section, in the case of sign-definite information quantities, the coefficients $\lambda_{\gI}$ obey a positivity condition that is not shared by the $\mu_{\pI}$ or $\nu_{\pI}$ coefficients.

Let us start by examining the $\N=2$ and $\N=3$ entropy inequalities mentioned in the Introduction, along with uplifts of the $\N=2$ ones to $\N=3$.
For example, we now have two distinct versions of SA.  Direct uplift of \eqref{e:SA} (denoted as SA$_{(1,1)}$ because the left-hand-side involves two monochromatic subsystems) yields $\S_1+\S_2-\S_{12}\ge0$, which when rendered into the I-basis yields $\I_{12}\ge0$, and similarly when recast in the K-basis yields  $\k_{12}^{(2)}\ge 0$  for $\N=2$ and $\k_{12}^{(3)}\ge 0$  for $\N=3$.  We see that the I and K bases renditions are intrinsically simpler since they involve just a single term, in contrast to three terms in the S-basis.  The second possibility  (denoted as SA$_{(1,2)}$ because the left-hand-side involves a monochromatic and a 2-party polychromatic subsystem) is $\S_1+\S_{23}-\S_{123}\ge0$, which now has a more complicated rendering in the I-basis, $\I_{12}+\I_{13}-\I_{123}\ge0$, as well as in the K-basis, $\k_{12}^{(3)}+\k_{13}^{(3)}+\k_{1234}^{(3)}\ge0$.  However, whereas SA$_{(1,1)}$ is primitive, SA$_{(1,2)}$ is not  \cite{Hubeny:2018trv}.

\begin{table}[htbp]
\begin{center}
\footnotesize
\begin{tabular}{| c | c |l |l |l | c |}
\hline
$\N$ & Relation & S-basis & \shadeI{I-basis} & \shadeK{K-basis} & Nature \\
\hline
\hline
2,3 & SA$_{(1,1)}$ &
  $\S_{1}+\S_{2}-\S_{12}$ & 
  \shadeI{$\I_{12}$} &
  \shadeK{$\k_{12}^{(\N)}$ }
  & P, b
\\  \hline 
3 & SA$_{(1,2)}$ &
  $\S_{1}+\S_{23}-\S_{123}$ & 
  \shadeI{$\I_{12}+\I_{13}-\I_{123}$} &
  \shadeK{$\k_{12}^{(3)}+\k_{13}^{(3)}+\k_{1234}^{(3)}$ }
  & b
\\  \hline 
2 & AL$_{(1,2)}$ &
  $\S_{1}+\S_{12}-\S_{2}$ & 
  \shadeI{$2 \I_{1} - \I_{12}$} &
  \shadeK{$\k_{13}^{(2)}$ }
  & P
\\  \hline 
3 & AL$_{(1,2)}$ &
  $\S_{1}+\S_{12}-\S_{2}$ & 
  \shadeI{$2 \I_{1} - \I_{12}$} &
  \shadeK{$\k_{13}^{(3)}+\k_{14}^{(3)}+\k_{1234}^{(3)}$ }
  & 
\\  \hline 
3 & AL$_{(1,3)}$ &
  $\S_{1}+\S_{123}-\S_{23}$ & 
  \shadeI{$2 \I_{1} - \I_{12} - \I_{13}+\I_{123}$} &
  \shadeK{$\k_{14}^{(3)}$ }
  & P
\\  \hline 
3 & AL$_{(2,3)}$ &
  $\S_{12}+\S_{123}-\S_{3}$ & 
 \shadeI{\makecell{$2 \I_{1} + 2 \I_{2} - 2\I_{12}- \I_{13}$ \\ $ - \I_{23}+\I_{123}$}} &
  \shadeK{$\k_{14}^{(3)}+\k_{24}^{(3)}+\k_{1234}^{(3)}$ }
  & 
\\  \hline 
3 & SSA$_{(2,2)}$ &
  $\S_{12}+\S_{23}-\S_{2}-\S_{123}$ & 
 \shadeI{$\I_{13} - \I_{123}$} &
  \shadeK{$\k_{13}^{(3)}+\k_{1234}^{(3)}$ }
  & b
\\  \hline 
3 & WM$_{(2,2)}$ &
  $\S_{12}+\S_{23}-\S_{1}-\S_{3}$ & 
  \shadeI{$ 2 \I_{2}-\I_{12} -\I_{23}$} &
  \shadeK{$\k_{24}^{(3)}+\k_{1234}^{(3)}$ }
  & 
\\  \hline 
3 & MMI$_{(1,1,1)}$ &
  \makecell{$-\S_{1}-\S_{2}-\S_{3}+\S_{12}$ \\ $+\S_{13}+\S_{23}-\S_{123}$} & 
  \shadeI{$-\I_{123}$} &
  \shadeK{$\k_{1234}^{(3)}$ }
  & P, b$_2$
\\ \hline
\end{tabular}
\end{center}
\caption{Information quantities for $\N=2$ and $\N=3$, in the S, I, and K bases.  The last column indicates the information quantities that are primitive by ``P'' and those which are balanced by ``b" (or  ``b$_2$" when 2-balanced).}
\label{tab:summaryN3}
\end{table}

Table~\ref{tab:summaryN3} summarizes all the information quantities (up to permutations) corresponding to the inequalities mentioned in the Introduction (and their various uplifts) for $\N=2$ and $\N= 3$.  The name of the relation refers to the inequality in question, with the subscript referring to the sizes of the polychromatic subsystems evoked by the terms with positive coefficients when written in S-basis.  
The last column denotes the nature of the quantity, in particular whether it is balanced (indicated by ``b'') or 2-balanced (indicated by ``b$_2$'', cf.\ footnote \ref{f:Rbalance}), and whether it is primitive (``P'', cf.\ footnote \ref{f:primitive}).

Observe that Table \ref{tab:summaryN3} manifests several salient features:
\begin{itemize}
\item Information quantities rendered in the K-basis are simpler (i.e.\ involve fewer terms) than when rendered in the S-basis.  
\item Primitive quantities are much simpler in the K-basis (for the $\N=2,3$ cases, they consist of just a single term) than non-primitive quantities (namely, ones which are given by the sum of multiple primitive quantities).

\item Most importantly, all the coefficients in the K-basis are non-negative (unlike those in the S and I-bases).
\end{itemize}

It turns out that these properties remain true for larger $\N$ as well.  However, since the number of interesting information quantities grows with $\N$, we relegate the explicit details to Appendix \ref{a:QinSIKbases}.  In particular,
Appendix \ref{a:QinSIKbases} contains analogous tables for the  sign-indefinite primitive information quantities for $\N=4$ (Table \ref{tab:summaryN4}) and the sign-definite primitive information quantities for $\N=5$ (Table \ref{tab:summaryN5}), both up to permutations, and the latter up to purifications.  We have also checked explicitly that the above properties hold for all sign-definite information quantities for $\N=4$ (which are all uplifts of the lower $\N$ inequalities) as well as hitherto-known sign-definite $\N=6$ information quantities.\footnote{\
We thank Sergio Hernandez Cuenca for sharing the $\N=6$ information quantities with us.}

To distill the information indicating the `complexity' of the expression (defined as the number of distinct terms, but ignoring the value or sign of the nonzero coefficients), we consolidate this abreviated information in Table~\ref{termnumberN4} and Table~\ref{termnumberN5}.  In particular,
the two tables summarize the number of terms comprising these expressions in the three bases, giving a rough indicator of which basis is most convenient.\footnote{\ 
Note that while the number of terms is invariant under color permutation in each basis, it is additionally invariant under purifications only in the S and K basis (since the K-basis treats the purifier on equal footing and the S-basis just swaps terms) but not in the I-basis, so the number of terms for the I-basis may change for different instances of the same inequality.  We already saw a manifestation of this in Table \ref{tab:summaryN3}, and will encounter further examples below.
} 
We now comment on the results in a bit more detail.
\begin{table}[htbp]
\begin{center}
\footnotesize
\begin{tabular}{|l|c|c|c|c|c|c|c|}
\hline
Basis & $\I_4$ & $Q_1^{(4)}$ & $Q_2^{(4)}$ & $Q_4^{(4)}$ & $Q_5^{(4)}$ & $Q_6^{(4)}$ & $Q_7^{(4)}$ \\
\hline
S & 15 & 8 & 8 & 9 & 9 & 11 & 11 \\ \hline
\shadeI I & \shadeI 1 & \shadeI 2 & \shadeI 2 & \shadeI 3 & \shadeI 4 & \shadeI 4 & \shadeI 5  \\ \hline
\shadeK K & \shadeK 1 & \shadeK 2 & \shadeK 2 & \shadeK 4 & \shadeK 4 & \shadeK 5 & \shadeK 5  \\ 
\hline
\end{tabular}
\end{center}
\caption{Sign-indefinite primitive information quantities for $\N=4$ (the subscripts on the $Q_i^{(4)}$'s adhere to the conventions of \cite{Hubeny:2018ijt}; $Q_3^{(4)}$ is absent because it is an uplift of MMI and thus sign-definite, and the remaining $Q_i^{(4)}$'s are pairwise related by purification; see \cite{Hubeny:2018ijt} for more details). The rows denote the number of terms in each expression when expressed in S, I, or K basis. For a more comprehensive table specifying the information quantities explicitly, see Table~\ref{tab:summaryN4}  in Appendix~\ref{a:QinSIKbases}.}\label{termnumberN4}
\end{table}

Table \ref{tab:summaryN4} lists all sign-indefinite primitive information quantities for $\N=4$ (which have already been written down in  \cite{Hubeny:2018ijt}  in S and I bases.)  Rather satisfyingly, the signs of the $\lambda_\gI$ coefficients are mixed in all cases (except the most trivial one, corresponding to 4-partite information).   The I-basis version manifests that all  $Q^{(4)}_j$'s are 2-balanced, while $\I_4$ is 3-balanced.   The number of terms in each expression is summarized in Table \ref{termnumberN4}.  We see that in the S-basis, our expressions in each case involve at least twice as many terms as in the I or K bases.  
On the other hand, the expressions are comparably simple in the I and K bases, and in fact there are instances in the full $\mathbf{S}_{\N+1}$ symmetry orbit where the I-basis expressions have one fewer term than the K-basis ones.

Table \ref{tab:summaryN5} lists all sign-definite primitive information quantities for $\N=5$ (which have already been summarized in  \cite{Cuenca:2019uzx} in the S-basis, and originally found in  \cite{Bao:2015bfa}).   To give an example, the simplest new $\N=5$ inequality is the so-called 5-party cyclic inequality. In the S-basis, it is usually written in the manifestly cyclically symmetric form
\begin{align}\label{cyclicS}
\begin{split}
	\S_{123} &+ \S_{234} + \S_{345} + \S_{145} + \S_{125}  - \S_{12} - \S_{23} - \S_{34}  - \S_{45} - \S_{15} - \S_{12345} \geq 0\ ,
\end{split}
\end{align}
which involves 11 terms.  Re-expressed in the I-basis, this becomes
\small{
\begin{align}\label{cyclicI}
-\I_{124} - \I_{134} - \I_{135} - \I_{235} - \I_{245} + \I_{1234} + \I_{1235} + \I_{1245} + \I_{1345} + \I_{2345} - \I_{12345}\geq 0\ ,
\end{align}
}
which likewise involves 11 terms.\footnote{\ 
In this case, we can in fact do much better under purifications.  For instance, purifying with respect to $\CA_1$ yields $-\I_{124} - \I_{134} - \I_{135}+ \I_{1234} +\I_{1345} $, which has only five terms.
} However, in the K-basis, it is
\begin{align}\label{cyclicK}
	2 \k_{123456}^{(5)} + \k_{1246}^{(5)} + \k_{1346}^{(5)} + \k_{1356}^{(5)} + \k_{2356}^{(5)} + \k_{2456}^{(5)} \geq 0\ ,
\end{align}
which only involves 6 terms.  
\begin{table}[htbp]
\begin{center}
\footnotesize
\begin{tabular}{|l|c|c|c|c|c|c|c|c|}
\hline
Basis & SA$_{(1,1)}$ & MMI$_{(1,1,1)}$ & MMI$_{(1,2,2)}$ & $Q_1^{(5)}$ & $Q_2^{(5)}$ & $Q_3^{(5)}$ & $Q_4^{(5)}$ & $Q_5^{(5)}$ \\
\hline
S & 3 & 7 & 7 & 11 & 16 & 19 & 16 & 22 \\ \hline
\shadeI I & \shadeI 1 & \shadeI 1 & \shadeI 9 & \shadeI{11} & \shadeI 7 & \shadeI 7 & \shadeI 6 & \shadeI{10} \\ \hline
\shadeK K & \shadeK 1 & \shadeK 3 & \shadeK 5 & \shadeK 6 & \shadeK 7 & \shadeK 7 & \shadeK 8 & \shadeK{10} \\ 
\hline
\end{tabular}
\end{center}
\caption{Sign-definite primitive information quantities for $\N=5$. The rows denote the number of terms in each expression when expressed in S, I, or K basis. For a more comprehensive table, see Table~\ref{tab:summaryN5} in Appendix~\ref{a:QinSIKbases}.}\label{termnumberN5}
\end{table}
Table~\ref{termnumberN5} summarizes the number of terms involved in each of the other inequalities.  We see that the K-basis is still overall  the most compact of the three, though under some purifications the I-basis quantities can be rendered at least as compact as the K-basis.\footnote{\ 
Apart from the simplification for $Q_1^{(5)}$ mentioned above, purifying MMI$_{(1,2,2)}$ with respect to one of the elements in the polychromatic arguments, such as $\CA_2$, yields MMI$_{(1,1,2)}$, which involves only three terms in the I-basis.
} 
As previously, in all cases the S-basis expressions contain many more terms. 

Part of the reason why many of the holographic entropy inequalities simplify in the K-basis (at least for $\N \leq 5$) stems from the fact that the basis elements of the K-basis are proportional to the extreme rays in the holographic entropy cone (see Theorem~\ref{extremalthm} in the next section), which have a privileged role.  The hyperplanes forming the facets of the cone are generated by the span of $\D-1$ extreme rays.  The normal vector to each hyperplane has components that give the coefficients in the corresponding inequalities, so by construction the given inequality cannot evoke any of the $\k^{\n}_\gI$'s that comprise the generating extreme vectors.  This therefore decreases the eligible set of $\k^{\n}_\gI$'s that can be evoked in the inequality.  

Moreover, the observation that 
every inequality can be written in the form
\begin{align}\label{conj}
	\sum_{\gI} \l_\gI \,  \k^{\n}_\gI \geq 0
\end{align}
with non-negative coefficients $\l_\gI$ explains why the primitive information quantities are generally simpler than the non-primitive ones.
For the non-primitive inequalities, which are defined to be sums of the primitive inequalities, the non-negative nature of $\lambda_\gI$ in \eqref{conj} implies that cancellation between terms in different primitive inequalities cannot occur, and the number of terms can only grow when we sum two inequalities. In the next section, we prove both the positivity of $\lambda_\gI$, as well as the fact that the K-basis elements are all proportional to extreme rays.

\section{K-Basis Properties}\label{positivity}

As a warm up, we begin by first proving that the K-basis is indeed a basis.\footnote{\ 
Of course, this is implicitly obvious from our construction, which provides a surjection between $\S_\pI$'s and $\k_\gI$'s, but it will be instructive in developing our toolkit to demonstrate this explicitly.
} We then turn to proving some nice properties of the K-basis  in the following two subsections.

\begin{theorem} The elements $\hat g^\G$ corresponding to all \PTt{2s} with $2s\le \N+1$ are linearly independent and span the entropy space. 
\end{theorem}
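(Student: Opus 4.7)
My plan is to reduce the theorem to linear independence, since the counting $\sum_{s=1}^{\lceil\N/2\rceil}\binom{\N+1}{2s}=2^{\N}-1=\D$ performed just above the theorem statement already matches the dimension of the entropy space; then $\D$ linearly independent vectors will automatically span. To establish independence, I will probe a hypothetical relation $\sum_{\gI}c_{\gI}\,\KB{\gI}=0$ using the multipartite-information functionals $\I_{\gI'}$ (defined on the entropy space by the usual alternating-sum formula, with the purification $S_{\pK}=S_{\pK^{c}}$ implicit), one for each K-basis label $\gI'$. This yields the square linear system
\begin{equation*}
\sum_{\gI}c_{\gI}\,\I_{\gI'}(\gI)=0,\qquad \gI'\text{ ranging over all even-sized subsets of }[\N+1].
\end{equation*}
If the matrix $M_{\gI',\gI}:=\I_{\gI'}(\gI)$ is invertible, only $c=0$ solves the system, giving linear independence.

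The key lemma is that $\I_{\gI'}(\gI)=0$ whenever $\gI'\not\subseteq \gI$. In a \PTt{2s} state labeled by $\gI$, any party outside $\gI$ is decoupled from the entanglement and contributes trivially: for $\pK\subseteq \gI'$ decomposed as $\pK=\pK_{\mathrm{in}}\sqcup \pK_{\mathrm{out}}$ with $\pK_{\mathrm{in}}\subseteq\gI'\cap \gI$ and $\pK_{\mathrm{out}}\subseteq \gI'\setminus\gI$, the entropy satisfies $S_{\pK}(\gI)=S_{\pK_{\mathrm{in}}}(\gI)$. Substituting into the defining formula $\I_{\gI'}=\sum_{\pK\subseteq \gI'}(-1)^{1+|\pK|}S_{\pK}$ and factoring the double sum produces a prefactor $\sum_{\pK_{\mathrm{out}}\subseteq \gI'\setminus\gI}(-1)^{|\pK_{\mathrm{out}}|}$, which vanishes as soon as $\gI'\setminus\gI\neq\emptyset$.

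With this lemma in hand, I order the K-basis labels by nondecreasing $|\gI|$, breaking ties arbitrarily. Any below-diagonal entry $M_{\gI',\gI}$ has $\gI'$ later than $\gI$, hence $|\gI'|\geq|\gI|$ with $\gI'\neq\gI$; these two conditions together preclude $\gI'\subseteq\gI$, and the lemma gives $M_{\gI',\gI}=0$, showing $M$ is upper triangular. For the diagonal, the internal $\mathbf{S}_{2s}$ symmetry of a \PTt{2s} forces $S_{\pK}(\gI)=\min(|\pK|,2s-|\pK|)$ for $\pK\subseteq \gI$, so
\begin{equation*}
\I_{\gI}(\gI)=\sum_{k=1}^{2s}\binom{2s}{k}(-1)^{1+k}\min(k,\,2s-k)=(-1)^{s+1}\,2\binom{2s-2}{s-1}\neq 0.
\end{equation*}
I anticipate this last identity to be the main technical obstacle; it can be handled by splitting the sum at $k=s$, exploiting the symmetry $k\leftrightarrow 2s-k$, and invoking the standard identities $\sum_{k}\binom{2s}{k}(-1)^{k}k=0$ and $\sum_{j=0}^{n}(-1)^{j}\binom{m}{j}=(-1)^{n}\binom{m-1}{n}$. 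With the diagonal nonzero, $\det M\neq 0$, so $c_{\gI}=0$ for every $\gI$, completing the proof.
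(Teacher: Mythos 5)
Your proof is correct and follows essentially the same route as the paper: both arguments probe a putative linear relation with the multipartite-information functionals, use the vanishing of $\I_{\gI'}$ on $\hat g^{\gI}$ whenever $\gI'$ contains a party outside $\gI$, and rely on the nonvanishing of the diagonal value $\I_{\gI}(\gI)$ (your closed form $(-1)^{s+1}2\binom{2s-2}{s-1}$ agrees with the paper's $\frac{(-1)^{s+1}(s+1)}{2s-1}\binom{2s}{s+1}$). The only difference is packaging — you organize the computation as an upper-triangular matrix ordered by cardinality, whereas the paper extracts the maximal-cardinality term and derives a contradiction — which amounts to the same induction.
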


\begin{proof}

Linear independence and completeness are enough to show that $\{\hat g^\G\}$ form a basis, which we call the K-basis.  
Completeness is easy:
Recall that in an $\N$-party system, there are $\binom{\N+1}{2s}$ possible PT$_{2s}$'s, so the total number of PTs is
	\begin{align}
	\sum_{s=1}^{\lceil \N/2 \rceil} \binom{\N+1}{2s} = 2^{\N}-1 = \D\ ,
\end{align}
which matches the total dimension of the entropy space.   We therefore only need to prove that the entropy vectors associated to the PTs are all linearly independent.

We proceed by contradiction. Suppose that $\big\{\KB{\gI}\big\}$ are in fact linearly dependent.  This means that there is some subset of $\{\gI\}$, call it $\{\gInz\}$, for which 
\begin{align}\label{linindep}
	\sum_{\gInz} \a_\gInz \, \hat g^\gInz = 0\ ,
	 \qquad {\rm with} \  \ \ \a_\gInz \ne 0  \ \ \forall\ \gInz \ .
\end{align}

We choose $\G_0 \in \{\gInz\}$ to be a subsystem that has the highest cardinality, namely $n_{\G_0} \geq n_{\tilde\G} \ \forall\ \gInz$. Then $\G_0$ is a PT$_{n}$ for some $n \equiv 2s$, and we may without loss of generality assume it involves the parties $\CA_1,\ldots,\CA_{n}$. We then compute the $n$-partite information $\I_{12\cdots n}$ for $\G_0$ to be
\begin{align}
\begin{split}
	\left.\I_{12\cdots n}\right|_{\big(\hat g^{\G_0}\big)} &= \sum_{k=1}^{n}(-1)^{k+1}\binom{n}{k}\min(k,n-k) \\
	&=\sum_{k=1}^{n/2}\left[(-1)^{k+1}\binom{n}{k}k + (-1)^{k+\frac{n}{2}+1}\binom{n}{k+\frac{n}{2}}\left(\frac{n}{2}-k\right)\right]\\
	&= \frac{(-1)^{\frac{n}{2}+1}\left(\frac{n}{2}+1\right)}{n-1}\binom{n}{\frac{n}{2}+1}\ ,
\end{split}
\end{align}
where the subscript $\big(\hat g^{\G_0}\big)$ indicates the entropy vector on which we are evaluating the multipartite information. In particular, this expression is nonzero.

On the other hand, consider now any other $\G'$ where $\a_{\G'} \not= 0$. Suppose the number of parties in $\G'$ is $n_{\G'} = m$ with $m \leq n$, and that $n_{\G' \cap \G_0} = p$ for some $p < n$ (note if $p=n$, then $\G' = \G$). This means there is some $i \in [n]$ such that $\CA_i \in \G_0$ but $\CA_i \notin \G'$. If we evaluate $\I_{12\cdots n}$ on $\hat g^{\G'}$, this amounts to setting $\CA_i$ to zero. By the argument given in Section~\ref{ss:Ibasis}, this means $\I_{12\cdots n}$ associated to $\hat g^{\G'}$ vanishes.\footnote{\ 
This fact can also be deduced via combinatorics explicitly. The $n$-partite information $\I_{12\cdots n}$ for $\G'$ is given by
\begin{align}
\begin{split}
	\left.\I_{12\cdots n}\right|_{\big(\hat g^{\G'}\big)} &= \sum_{k=1}^{n} (-1)^{k+1}\sum_{l=1}^k\binom{p}{l}\binom{n-p}{k-l}\min(l,m-l) \\
	&= \sum_{l=1}^p \binom{p}{l}\min(l,m-l)\sum_{k=l}^{n} (-1)^{k+1}\binom{n-p}{k-l} \\
	&=  \sum_{l=1}^p (-1)^{-l}\binom{p}{l}\min(l,m-l)\sum_{k=0}^{n-p} (-1)^{k+1}\binom{n-p}{k}\ .
\end{split}
\end{align}
 Because $p < n$, the second sum in the last line vanishes, so the $n$-partite information associated to $\hat g^{\G'}$ vanishes as well.} This is true for all $\G' \not= \G_0$ with nonzero $\a_{\G'}$, so in order for \eqref{linindep} to hold, we need $\a_{\G_0} = 0$, a contradiction.

\end{proof}

\subsection{Positivity of the K-Basis}

The fact that all the known holographic entropy inequalities involve only positive linear combinations of $\k_\pI^\n$ (cf.\ e.g.\ Table \ref{tab:summaryN5} in Appendix \ref{a:QinSIKbases}) strongly suggests the following theorem. 
\begin{theorem}\label{positivethm} Given any $\N$-party system, the holographic entropy inequalities, when expressed in the K-basis, can all be written in the form
\begin{align}\label{thmeqn}
	\sum_{\G} \l_\G \, \k^{(\N)}_\G \geq 0  
	\qquad {\rm with} \ \l_\gI \geq 0 \ \ \forall \ \G \ ,
\end{align}
where $\gI$ labels the subsets of the parties (including the purifier) that have an even number of elements. 
\end{theorem}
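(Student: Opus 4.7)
The plan is to read off each coefficient $\l_{\gI_0}$ directly by evaluating the inequality on the corresponding K-basis vector. The key supporting fact, which is the content of the companion Theorem~\ref{extremalthm}, is that each $\KB{\gI}$ is realizable as the entropy vector of a holographic state and hence lies in the holographic entropy cone. Granted this, the positivity claim becomes essentially tautological.

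Concretely, any holographic entropy inequality can be expressed uniquely in the K-basis as $\Q = \sum_\gI \l_\gI \k_\gI \geq 0$, where the inequality is understood to hold for every entropy vector in the cone. The basis vector $\KB{\gI_0}$ has, by the defining property of a basis, K-basis coordinates $\k_\gI = \delta_{\gI,\gI_0}$. Since information quantities are linear in the entropies, and hence also linear in the K-basis coordinates, evaluating $\Q$ at $\KB{\gI_0}$ simply picks out $\Q(\KB{\gI_0}) = \l_{\gI_0}$. Because $\KB{\gI_0}$ lies in the holographic entropy cone, $\Q(\KB{\gI_0}) \geq 0$, which forces $\l_{\gI_0} \geq 0$. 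Letting $\gI_0$ range over all even-cardinality subsets of $[\N+1]$ yields \eqref{thmeqn}.

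The main obstacle is therefore the supporting realizability claim, not the positivity argument itself. One must verify that each PT entropy vector is in the holographic entropy cone that defines what ``holographic entropy inequality'' means. The natural realization is a multi-boundary wormhole with $2s$ asymptotic regions of equal bottleneck area $\tfrac{1}{4G_N}$ and the remaining $\N+1-2s$ parties disentangled; a direct application of the RT formula then recovers \eqref{e:SforK}. This realizability (really a special case of Theorem~\ref{extremalthm}) is also presumably where whatever assumption is expounded in the footnote referenced in the theorem preamble plays its role, since it pins down precisely the sense in which the cone contains the PT entropy vectors. A subtle consequence worth highlighting, which the proof also explains cleanly, is the remark in footnote~\ref{f:PTpositivity}: although positivity of $\l_{\gI_0}$ holds for every facet-defining inequality, it is \emph{not} the case that every entropy vector in the cone admits a non-negative K-basis expansion, because (for sufficiently large $\N$) there exist extreme rays not proportional to any single $\KB{\gI}$, and these may still lie on the positive side of every facet while having mixed-sign K-basis coordinates.
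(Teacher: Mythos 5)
Your proposal is correct and follows essentially the same route as the paper's proof: evaluate the inequality on each basis vector $\KB{\gI_0}$ (whose K-basis coordinates are $\delta_{\gI,\gI_0}$), and use the fact that the PT entropy vectors lie in the holographic entropy cone to conclude $\l_{\gI_0}\geq 0$. The only slight mismatch is attribution: the realizability of the PTs as holographic states is an assumption the paper adopts (discussed in footnote~\ref{f:metric_cone}) rather than a consequence of Theorem~\ref{extremalthm}, which instead takes membership in the cone as given and establishes extremality.
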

As we shall see, once we realize the fact that the basis elements of the K-basis lie within the entropy cone, the proof is almost trivial.
\begin{proof}
	We begin by observing that every basis element $\hat g^\G$ in the K-basis is a vector within the entropy cone, given that all the PTs are realizable as multiboundary wormhole geometries in the bulk.\footnote{\label{f:metric_cone} 
While this claim was made and used in \cite{Bao:2015bfa}, the additional requirement that the bulk geometries used for its construction should be dual to physical states in a CFT render this argument inconclusive \cite{Marolf:2017shp}.  In particular, using earlier results of \cite{Maxfield:2016mwh}, the authors argue that the simplest such geometries are likely to correspond to  subdominant bulk phases of natural path integrals.  In the language of \cite{Marolf:2017shp}, the K-basis vectors are guaranteed to lie inside the {\it metric entropy cone} but not necessarily the {\it HRT cone} (which is the more appropriate definition of the holographic entropy cone).  Here we will simply adhere to the common assumption that there indeed exist CFT states realizing the K-basis vectors, and use it both in this subsection and the next (in proving Theorem~\ref{extremalthm}).
}
Since any vector in entropy space (not necessarily in the cone) is of the form
	\begin{align}
	\sum_\G \k_\G^\n \hat g^\G\ ,
\end{align}
this implies for any fixed $\G'$, setting $\k_{\G}^\n = \delta_{\G'\G}$ yields an entropy vector that lies within the holographic entropy cone. Since all the entropy vectors lying within the cone satisfy \eqref{thmeqn}, substituting $\k_{\G}^\n = \delta_{\G'\G}$ into \eqref{thmeqn} implies
\begin{align}
	\l_{\G'} \geq 0\ .
\end{align}
This is true for any fixed $\G'$, so the result follows.

\end{proof}

In Appendix~\ref{a:contraction}, we provide a weaker version of the proof that only applies to holographic entropy inequalities provable using the \emph{method of contraction} introduced in \cite{Bao:2015bfa}. This mainly serves to illustrate another usage of the contraction mapping, although because all known holographic entropy inequalities are provable by contraction, this may also be an alternative proof to Theorem~\ref{positivethm} if indeed all entropy inequalities fall into this category.

\subsection{Relation to Extreme Rays}

The simplification of the holographic entropy inequalities for low values of $\N$ in the K-basis as opposed to the S-basis begs for an explanation. In this subsection, we provide partial intuition for why this may be the case with the following theorem.\begin{theorem} In the K-basis, the basis elements $\hat g^\G$ are proportional to extreme rays in the holographic entropy cone. \label{extremalthm}
\end{theorem}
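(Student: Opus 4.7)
The plan is to show directly that whenever $\KB{\G} = v_1 + v_2$ with $v_1,v_2$ in the holographic entropy cone, both $v_i$ are non-negative multiples of $\KB{\G}$, which is the standard characterization of an extreme ray. I will invoke only the universal entropy inequalities---non-negativity, subadditivity (SA), and Araki--Lieb (AL)---together with the saturation relations built into the PT$_{2s}$ entropy vector $\KB{\G}$ of $\G\subseteq[\N+1]$ with $|\G|=2s$.

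First I would localize $v_1,v_2$ to $\G$. Since $\S_T(\KB{\G}) = 0$ whenever $T\cap\G\in\{\emptyset,\G\}$, non-negativity forces $\S_T(v_i) = 0$ in the same cases. Combined with SA and AL, for every $j\in\G^c$ this gives $\S_{T\cup\{j\}}(v_i) = \S_T(v_i)$ (so parties outside $\G$ are decoupled in both summands), and for every $T\subseteq\G$ it gives the entropy-vector analogue of purity on $\G$, namely $\S_T(v_i) = \S_{\G\setminus T}(v_i)$.

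The substantive step is to exploit SA saturation. For any disjoint $T_1,T_2\subseteq\G$ with $|T_1|+|T_2|\le s$, one has $\S_{T_1}(\KB{\G}) + \S_{T_2}(\KB{\G}) = |T_1|+|T_2| = \S_{T_1\cup T_2}(\KB{\G})$, which saturates SA on $\KB{\G}$, so linearity $v_1+v_2=\KB{\G}$ together with SA on each $v_i$ forces the same saturation on both summands. Iterating gives the additivity $\S_T(v_i) = \sum_{j\in T}\S_{\{j\}}(v_i)$ for every $T\subseteq\G$ with $|T|\le s$. Writing $f_j := \S_{\{j\}}(v_1)$, purity at $|T|=s$ gives $\sum_{j\in T} f_j = \sum_{j\in\G\setminus T} f_j$; swapping a single element between $T$ and $\G\setminus T$ then forces $f_a = f_b$ for all $a,b\in\G$, so every $f_j$ equals a common value $t\in[0,1]$. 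This combinatorial collapse---additivity together with purity pinning down a single parameter---is what I expect to be the key substantive point.

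Combining additivity for $|T|\le s$, purity for $|T|>s$, and the decoupling from the first step yields $\S_T(v_1) = t\,\S_T(\KB{\G})$ for every $T\subseteq[\N+1]$, so $v_1 = t\KB{\G}$ and $v_2 = (1-t)\KB{\G}$, establishing that $\KB{\G}$ is proportional to an extreme ray. The conceptual subtlety to watch out for is the temptation to phrase the argument as if $v_i$ were literal quantum states (``pure on $\G$,'' ``uncorrelated outside $\G$''); every inference must rest purely on the entropy inequalities obeyed by vectors in the cone, which is in fact the case. The one external input, as flagged in footnote~\ref{f:metric_cone}, is the standing assumption that each $\KB{\G}$ really lies in the holographic entropy cone, which is what makes the extremality question meaningful in the first place.
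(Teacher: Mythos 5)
Your proof is correct and follows essentially the same route as the paper's: both exploit the saturation of subadditivity on subsets of $\G$ of combined cardinality at most $s$ to force additivity of each summand's entropies, and then use purity across an equal bipartition of $\G$ to pin all monochromatic entropies to a common value $t$, giving $v_i \propto \KB{\G}$. The only cosmetic differences are that you obtain the equality of monochromatic entropies by swapping a single element across a size-$s$ bipartition where the paper orders them and compares $\{1,\dots,s\}$ against $\{s+1,\dots,2s\}$, and that your handling of the decoupling of parties outside $\G$ via SA and AL is somewhat more explicit.
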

\begin{proof}
	An extreme ray in the holographic entropy cone is proportional to a vector $\vec \S$ that can only be written as a positive linear combination of rays proportional to $\vec \S$ in the cone. Let $\G$ be a fixed PT$_{2s}$, and suppose that 
	\begin{align}\label{assumption}
	\hat g^\G  = \sum_{k=1}^K a_k \, \vec \S^{(k)}\ ,
\end{align}
where $\vec \S^{(k)}$ are some entropy vectors in the cone, and $a_k > 0$ are positive coefficients. 
We want to show that each  $\vec \S^{(k)}$ must then be proportional to $\hat g^\G$.
Without loss of generality, we may assume $\G$ is the PT involving parties $\CA_1,\ldots,\CA_{2s}$. Clearly the monochromatic entropies $\S^{(k)}_i$ must only involve the parties $\CA_1,\ldots,\CA_{2s}$; otherwise, if $\S^{(k)}_j > 0$ for $j \notin [2s]$, then the $j$-th component on the right-hand-side of \eqref{assumption} is positive, while the $j$-th component of $\hat g^\G$ vanishes, a contradiction.

We now observe that for any two disjoint subsets of parties $\CX_1$ and $\CX_2$ in $\G$, with cardinalities $n_{\CX_1}$ and $n_{\CX_2}$, respectively,
\begin{align}\label{zeromutual}
\begin{split}
	\left.\I(\CX_1:\CX_2)\right|_{\big( \hat g^\G\big)} = 0 \quad\text{iff}\quad n_{\CX_1} + n_{\CX_2} \leq s\ ,
\end{split}
\end{align}
where $\I(\CX_1:\CX_2)$ is shorthand for the mutual information between the union of the parties in $\CX_1$ with those in $\CX_2$, and the subscript $\big(\hat g^\G\big)$ indicates the entropy vector on which we are evaluating this mutual information.\footnote{\ 
Relation \eqref{zeromutual} is easy to see explicitly, but also follows trivially from \eqref{eq:I_PT_half_vanish} by expanding $\I(\CX_1:\CX_2)$ in the I-basis and noting that each term must vanish individually.
} 
Using \eqref{assumption} and the fact that mutual information is always non-negative (by SA), this means if $n_{\CX_1} + n_{\CX_2} \leq s$, $\I(\CX_1:\CX_2)$ must also vanish for every entropy vector on the right-hand-side of \eqref{assumption}, i.e.
\begin{align}
	\S^{(k)}_{\CX_1} + \S^{(k)}_{\CX_2} - \S^{(k)}_{\CX_1\CX_2} = 0 \quad\forall \ k \quad\text{if} \quad n_{\CX_1} + n_{\CX_2} \leq s\ .
\end{align}
This in particular means that for any subset $\CX \subset [2s]$ with $n_\CX \leq s$, we have
\begin{align}\label{bipartiteS}
	\S^{(k)}_{\CX} = \sum_{i \in \CX} \S^{(k)}_i \quad\forall \ k \ .
\end{align}

Consider now a fixed $\vec\S^{(k)}$. As it involves at most the parties $\CA_1,\ldots,\CA_{2s}$, we may label the parties such that the monochromatic entropies obey
\begin{align}\label{monochromaticS}
	\S^{(k)}_1 \leq \S^{(k)}_2 \leq \cdots \leq \S^{(k)}_{2s}\ .
\end{align}
We now want to examine the mutual information between $\pI_1 = \{1,\ldots,s\}$ and $\pI_2 = \{s+1,\ldots,2s\}$ associated to $\vec\S^{(k)}$.  Since $\vec\S^{(k)}$ is the entropy vector associated to a pure state involving $2s$ parties, 
\begin{align}
	\S^{(k)}_{\pI_1} = \S^{(k)}_{\pI_2}\ .
\end{align}
On the other hand, we know from \eqref{bipartiteS}
\begin{align}
	\S^{(k)}_{\pI_1} = \sum_{i=1}^{s} \S^{(k)}_i \leq \sum_{i=s+1}^{2s} \S^{(k)}_i = \S^{(k)}_{\pI_2}\ . 
\end{align}
Thus, there is no contradiction (i.e.\ the inequality above is actually an equality) if and only if all the inequalities in \eqref{monochromaticS} are equalities. This proves $\vec\S^{(k)}$ is actually proportional to $\hat g^\G$. As this is true for all $k$, it follows that $\hat g^\G$ can only be written as sums of vectors in the holographic entropy proportional to itself, thereby completing the proof.

\end{proof}

\section{Discussion}
\label{discussion}

We have examined the relative merits of three distinct bases for the $\N$-party entropy space (for arbitrary $\N$), namely the {\it S-basis} based on the entanglement entropies of polychromatic subsystems, the  {\it I-basis} based on multipartite information between the monochromatic subsystems, and the  {\it K-basis} based on even-party PTs.  While the S-basis is the most frequently used one \cite{Bao:2015bfa,Cuenca:2019uzx,Hubeny:2018trv,Hubeny:2018ijt}, and the I-basis was considered recently as well \cite{Hubeny:2018ijt}, the K-basis is a new construct.  This was motivated primarily by symmetry considerations: while the former two explicitly evoke quantities built only out of the $\N$ subsystems, the K-basis treats the purifier on equal footing, thereby manifesting the full permutation plus purification symmetry.

We have seen that primitive information quantities are rendered substantially more compactly in both the I-basis and the K-basis than in the S-basis.  While the comparison between the number of terms was less conclusive between the I and K bases (partly hindered by the fact that in the I-basis this quantity need not remain invariant under purifications), we have unearthed one major advantage of the K-basis:  Any information quantity must have non-negative coefficients when expressed in this basis (cf.\ Theorem~\ref{positivethm}).  This is rooted in the fact that the PTs not only form a basis, but also lie in the entropy cone, representing physically accessible states, which allowed us to extract the sign of the individual coefficients.  Any other basis composed solely of physically accessible vectors would share this positivity feature in representing entropy inequalities as well.

On the other hand, the astonishing reduction in the number of terms in primitive information quantities is a distinct feature,  more closely tied to the PTs' distinguishing property: all PTs are extreme rays of the entropy cone (cf.\ Theorem~\ref{extremalthm}), meaning they cannot be expressed as a positive combination of any other vectors in the cone.  Correspondingly, they saturate a large number of holographic entropy inequalities, such as subadditivity for any pair of subsystems whose combined cardinality does not exceed half of the PT's cardinality.  However, while for $\N=2,3$ all the extreme rays are conversely PTs, it is no longer true that all extreme rays are PTs for larger $\N$.  This is already evident from the fact that there are more extreme rays than the dimensionality $\D$ of entropy space, whereas there are only $\D$ (even party) PTs by construction.\footnote{\ 
The remaining extreme rays originate from intersections of hyperplanes corresponding to higher-rank inequalities, which there are many more of with increasing $\N$.  (For example, while the  $\N=4$ holographic entropy cone has 16 extreme rays, the $\N=5$ one already has 2,267 extreme rays \cite{Cuenca:2019uzx}.)}

Given the success of the K-basis, one obvious question is whether we can do even better with yet another basis (or perhaps some convenient over-complete representation).  Since the even-party PTs forming the K-basis have so many nice properties, one might naturally wonder about the odd-party PTs (defined again as being maximally mixed for any bipartition within the PT structure and represented analogously as an equal-weight star graph with odd number of legs).  However, here the counting is not as clean as for the even-party ones:  The total number of polychromatic subsystems including the purifier is $2^{\N+1}-1$, which separates into those with even cardinality (which there are $\D = 2^\N-1$ of), and the remaining odd cardinality ones (with $2^\N$ elements).  Moreover, the monochromatic subsystems (along with the purifier) seem rather artificial as PTs, so discounting these we only have $2^\N-\N-1$ odd-party PTs, which is too few to form a basis.\footnote{\ 
On the other hand, if we do include the 1-PTs, we have one too many vectors to form a basis.  For even $\N$, where we only have a single \PTt{\N+1}, we can exclude it while retaining a permutation plus purification symmetric collection of $\D$ elements; however, this does not appear to retain any obvious advantages.
}  Nevertheless, this collection still has a number of special properties.  In particular, since each PT saturates SA for any partition into subsystems that make up less than half of the system, this vector must lie on a face of the entropy cone.  However, these cannot be extreme rays, since we can express each odd PT as a (normalized) sum over the 1-lower even PTs. 

If instead of candidate basis vectors for the entropy space we consider natural quantities (composed of a specific collection of entropies for polychromatic subsystems) in terms of which to express a given information quantity, another natural option presents itself.  In particular, we can consider a collection based solely on mutual information, but now of any non-overlapping pair of polychromatic subsystems (including the purifier so as to retain the full symmetry), i.e.\ $\{ \S_\pII + \S_\pKK - \S_{\pII\cup\pKK} \}$ for every $\pII,\pKK$ with $\pII \cap \pKK = \emptyset$.  This collection is over-complete; for example we can express the tripartite information in terms of three different such collections, $\I_{123}=\I(\CA_1:\CA_2)+\I(\CA_1:\CA_3)-\I(\CA_1:\CA_{23})$ and the two distinct permutations thereof.  While using a redundant representation sounds like a disadvantage, it may in fact render the packaging more compact and be more intimately tied to the fundamental nature of the holographic information quantities.  This representation will be examined in the upcoming work \cite{HHRR:2019aa}, which examines the holographic entropy arrangement \cite{Hubeny:2018ijt} from the point of view of phase transitions in the HRT surfaces that give the collection of entanglement entropies. 

Let us conclude by remarking on the role of holography.  The I and K bases were constructed in a purely algebraic manner.  This means that we can rewrite {\it any} information quantity in these bases, regardless of its relevance to holography.  One would expect that since the structures involved (multipartite informations and PT structures, respectively) have a natural quantum information theoretic meaning, that these bases will remain convenient even outside of the context of holography wherein they were employed.  In particular, the K-basis vectors, by virtue of saturating a large set of subadditivities, will still lie on the boundary of the quantum entropy cone, not just the holographic one.  More importantly, the proof of Theorem~\ref{extremalthm} of the previous section carries through in this broader context, so in fact the K-basis vectors are proportional to extreme rays of the full quantum entropy cone.\footnote{
In fact, the subtlety for holographic entropy (HRT) cone \cite{Marolf:2017shp} mentioned in footnote \ref{f:metric_cone} doesn't arise here, since the metric entropy cone is contained within the full quantum entropy cone 
\cite{Hayden:2016cfa}. Moreover, constructing PTs from qu$d$it states, while hard, is always possible for sufficiently large dimension $d$ \cite{Helwig:2012nha}.  
Furthermore, by the Page curve \cite{Page:1993wv}, a random state on the tensor product of $n$ identical large Hilbert spaces realizes a PT, which is another way to see that the PTs are realized in the quantum entropy cone.  We thank Michael Walter for alerting us to this argument.
}  It would therefore be interesting to explore the utility of these representations in the fully general context, transcending holography.

\section*{Acknowledgements}

We would like to thank Ning Bao, Sergio Hernandez Cuenca, Xi Dong, Mukund Rangamani, Massimiliano Rota, and Michael Walter  for stimulating conversations.  T.H.\ and V.H.\ were supported by U.S. Department of Energy grant DE-SC0009999 and by funds from the University of California.  M.H.\ was supported by the Simons Foundation through the ``It from Qubit'' Simons Collaboration, and by the U.S. Department of Energy under grant DE-SC0009987. M.H.\ would like to thank the UC Davis Center for Quantum Mathematics and Physics for hospitality.

\appendix
\section{Summary of Information Quantities in S, I, and K bases}
\label{a:QinSIKbases}

Here we summarize all interesting information quantities in the entropy (S), multipartite information (I), and perfect tensor (K) bases.
Since we already presented the case of $\N=2$ and $\N=3$ in Table \ref{tab:summaryN3} in \sref{s:infoQ}, and there are no other (sign-indefinite) information quantities, we focus on  $\N=4$ and $\N=5$.

\subsubsection*{$\N=4$:} 

\begin{table}[htbp]
\begin{center}
\scriptsize
\begin{tabular}{| c | c |l  | }
\hline
Relation & \makecell{Basis}  & Primitive Information Quantity  \\ 
\hline
\hline
 $\I_4$
 &  S (15) &  \makecell{$\S_{1}+\S_{2}+\S_{3}+\S_{4}-\S_{12}-\S_{13}-\S_{14}-\S_{23}-\S_{24}-\S_{34} +\S_{123}+\S_{124}+\S_{134} +\S_{234}-\S_{1234}$} \\ 
  &  \shadeI{I (1)} & \shadeI{$\I_{1234}$}  \\
&  \shadeK{K (1)} & \shadeK{$- 2\k_{1234}^{(4)}$ } 
\\  \hline 
$Q^{(4)}_1$
 &  S (8) &  $\S_{1}-\S_{2}-\S_{13}-\S_{14}+\S_{23}+\S_{24}+\S_{134}-\S_{234}$ \\ 
  &  \shadeI{I (2)} & \shadeI{$\I_{134}-\I_{234}$}  \\
&  \shadeK{K (2)} & \shadeK{$- \k_{1345}^{(4)}+ \k_{2345}^{(4)}$ } 
\\  \hline 
 $Q^{(4)}_2$
 &  S (8) &  $\S_{1}-\S_{12}-\S_{13}-\S_{14}+\S_{123}+\S_{124}+\S_{134}-\S_{1234}$ \\ 
&  \shadeI{I (2)} & \shadeI{$-\I_{234}+\I_{1234}$}  \\
 &  \shadeK{K (2)} & \shadeK{$- \k_{1234}^{(4)}+ \k_{2345}^{(4)}$ } 
\\  \hline  
$Q^{(4)}_4$
 &  S (9) &  $2\S_{1}+\S_{2}-2\S_{12}-\S_{13}-\S_{14}+\S_{34}+\S_{123}+\S_{124}-\S_{234}$ \\ 
 &  \shadeI{I (3)} & \shadeI{$\I_{123}+\I_{124}-\I_{234}$}  \\
 &  \shadeK{K (4)} & \shadeK{$- \k_{1234}^{(4)}- \k_{1235}^{(4)}- \k_{1245}^{(4)}+ \k_{2345}^{(4)}$ } 
\\  \hline 
 $Q^{(4)}_5$
 &  S (9) &  $\S_{1}+\S_{23}+\S_{24}+\S_{34}-\S_{123}-\S_{124}-\S_{134}-2\S_{234}+2\S_{1234}$ \\ 
 &  \shadeI{I (4)} & \shadeI{$\I_{123}+\I_{124}+\I_{134}-2\I_{1234}$}  \\
 &  \shadeK{K (4)} & \shadeK{$\k_{1234}^{(4)}- \k_{1235}^{(4)}- \k_{1245}^{(4)}- \k_{1345}^{(4)}$ } 
\\  \hline 
 $Q^{(4)}_6$
 &  S (11) &  $3\S_{1}-2\S_{12}-2\S_{13}-2\S_{14}+\S_{23}+\S_{24}+\S_{34}+\S_{123}+\S_{124}+\S_{134}-2\S_{234}$ \\ 
 &  \shadeI{I (4)} & \shadeI{$\I_{123}+\I_{124}+\I_{134}-2\I_{234}$}  \\
 &  \shadeK{K (5)} & \shadeK{$- \k_{1234}^{(4)}- \k_{1235}^{(4)}- \k_{1245}^{(4)}- \k_{1345}^{(4)}+2 \k_{2345}^{(4)}$ } 
\\  \hline 
 $Q^{(4)}_7$
 &  S (11) &  $\S_{12}+\S_{13}+\S_{14}+\S_{23}+\S_{24}+\S_{34}-2\S_{123}-2\S_{124}-2\S_{134}-2\S_{234}+3\S_{1234}$ \\ 
 &  \shadeI{I (5)} & \shadeI{$\I_{123}+\I_{124}+\I_{134}+\I_{234}-3\I_{1234}$}  \\
 &  \shadeK{K (5)} & \shadeK{$2\k_{1234}^{(4)}- \k_{1235}^{(4)}- \k_{1245}^{(4)}- \k_{1345}^{(4)}- \k_{2345}^{(4)}$  } 
\\  \hline 
 \end{tabular}
\end{center}
\caption{Sign-indefinite primitive information quantities for $\N=4$ in the S, I, and K bases.   The expressions in S and I bases (which have already been written down in  \cite{Hubeny:2018ijt}) have the same normalization, while the K-basis expressions are divided by overall factor of 2 for convenience.  The I-basis rendering manifests that all  $Q^{(4)}_j$s are 2-balanced, while $\I_4$ is 3-balanced.  The number of terms for each quantity is indicated in parentheses in the second column and is summarized in Table \ref{termnumberN4}. }
\label{tab:summaryN4}
\end{table}

For a 4-party system, the entropy vector has 15 components, which in the K-basis corresponds to the $\binom{5}{2} = 10$ Bell pairs and $\binom{5}{4} = 5$ PT$_4$. 
The generating expression for the conversion to the K-basis, analogous to \eqref{entropy3} becomes
\begin{align}\label{entropy4}
\begin{split}
	\S_1 &= \sum_{j = 2}^5 \k_{1j}^{(4)} + \sum_{1<j < k < l}^5 \k^{(4)}_{1jkl} \\
	\S_{12} &= \sum_{j =3}^5 \left( \k_{1j}^{(4)} + \k_{2j}^{(4)}\right) +  \k_{1345}^{(4)} + \k_{2345}^{(4)} + 2\sum_{2 < j < k}^5 \k^{(4)}_{12jk}\ .
\end{split}
\end{align}
Using the previous result of \cite{Hubeny:2018ijt}, which found all primitive $\N=4$ information quantities (and presented them in the S and I basis), we now express the same quantities additionally in the K-basis. These are not all independent:  some are related by purifications pairwise, namely $Q^{(4)}_1$ with $Q^{(4)}_2$, $Q^{(4)}_4$ with $Q^{(4)}_5$, and $Q^{(4)}_6$ with $Q^{(4)}_7$, as is most easily manifest from the K-basis representation, where the purification amounts to simply swapping 2 with 5, 125 with 512, and 1 with 5, respectively.  Note that, as advertised above, the number of terms in the I-basis changes only by 1 under the latter two purifications.

\subsubsection*{$\N=5$:}

\begin{table}[htbp]
\begin{center}
\scriptsize
\begin{tabular}{| c | c |l  | }
\hline
Relation & \makecell{Basis}  & Primitive Information Quantity  \\ 
\hline
\hline
 SA$_{(1,1)}$
 &  S (3) &  $\S_{1}+\S_{2}-\S_{12}$ \\ 
 &  \shadeI{I (1)} & \shadeI{$\I_{12}$}  \\
 &  \shadeK{K (1)} & \shadeK{$\k_{12}^{(5)}$ } 
\\  \hline  
 MMI$_{(1,1,1)}$
 &  S  (7) & $-\S_{1}-\S_{2}-\S_{3}+\S_{12}+\S_{13}+\S_{23}-\S_{123}$   \\ 
&  \shadeI{I (1)} & \shadeI{$-\I_{123}$}  \\
 &  \shadeK{K (3)} & \shadeK{$\k_{1234}^{(5)}+\k_{1235}^{(5)}+\k_{1236}^{(5)}$ } 
\\  \hline 
 MMI$_{(1,2,2)}$
 &  S (7)  & $-\S_{1}-\S_{23}-\S_{45}+\S_{123}+\S_{145}+\S_{2345}-\S_{12345}$  \\   
&  \shadeI{I (9)} & \shadeI{$-\I_{124}-\I_{125}-\I_{134}-\I_{135}+ \I_{1234} + \I_{1235} + \I_{1245}+ \I_{1345}- \I_{12345}$}  \\
 &  \shadeK{K (5)} & \shadeK{$\k_{1246}^{(5)}+\k_{1256}^{(5)}+\k_{1346}^{(5)}+\k_{1356}^{(5)}+\k_{123456}^{(5)}
$ } 
\\ \hline 
$Q^{(5)}_1$
 &  S  (11) &
   $ - \S_{12} - \S_{23} - \S_{34}  - \S_{45} - \S_{15} + \S_{123} + \S_{234} + \S_{345} + \S_{145} + \S_{125}  - \S_{12345}$ \\ 
 &  \shadeI{I  (11)} & \shadeI{$-\I_{124} - \I_{134} - \I_{135} - \I_{235} - \I_{245} + \I_{1234} + \I_{1235} + \I_{1245} + \I_{1345} + \I_{2345} - \I_{12345}$}  \\
 &  \shadeK{K  (6)} & \shadeK{$\k_{1246}^{(5)} + \k_{1346}^{(5)} + \k_{1356}^{(5)} + \k_{2356}^{(5)} + \k_{2456}^{(5)} +2 \k_{123456}^{(5)}  $ } 
\\  \hline 
$Q^{(5)}_2$
 &  S  (16)  &  \makecell{$- \S_{12}- \S_{13} - \S_{14} - \S_{23} - \S_{25} - \S_{45}+2\S_{123} + \S_{124} + \S_{125} + \S_{134} + \S_{145} + \S_{235} + \S_{245}   $\\$- \S_{1234} - \S_{1235} - \S_{1245}$}  \\ 
  &  \shadeI{I  (7)} & \shadeI{$- \I_{124} - \I_{125} - \I_{135} - \I_{234} + \I_{1234} + \I_{1235} + \I_{1245} $}  \\
 &  \shadeK{K (7)} & \shadeK{$ \k^{\5}_{1246} + \k^{\5}_{1256} + \k^{\5}_{1345} + \k^{\5}_{1356} + \k^{\5}_{2345} + \k^{\5}_{2346} + 3 \k^{\5}_{123456} $ } 
\\  \hline 
$Q^{(5)}_3$
 &  S   (19) &   \makecell{$ - \S_{12} - \S_{13} - \S_{14} - \S_{25}- \S_{35} - \S_{45} + \S_{123}  + \S_{124} + \S_{125} + \S_{134} + \S_{135} + \S_{145} + \S_{235} $\\$  + \S_{245} + \S_{345}  - \S_{234} - \S_{1235} - \S_{1245} - \S_{1345}$}  \\ 
 &  \shadeI{I (7)} & \shadeI{$- \I_{125} - \I_{135} - \I_{145} - \I_{234} + \I_{1235} + \I_{1245} + \I_{1345} $}  \\
 &  \shadeK{K (7)} & \shadeK{$ \k^{\5}_{1234} + \k^{\5}_{1256} + \k^{\5}_{1356} + \k^{\5}_{1456} + \k^{\5}_{2345} + \k^{\5}_{2346} +3 \k^{\5}_{123456} $ } 
\\  \hline 
$Q^{(5)}_4 $
 &  S   (16)  &   \makecell{$- \S_2 - \S_3 - \S_4 - \S_5 - \S_{12} - \S_{13} + \S_{23} + \S_{45} + \S_{123} + \S_{124} + \S_{125} + \S_{134} + \S_{135} - \S_{145}$\\$ - \S_{1234} - \S_{1235}$}  \\ 
 &  \shadeI{I (6)} & \shadeI{$- \I_{123} - \I_{145} - \I_{234} - \I_{235} + \I_{1234} + \I_{1235} $}  \\
 &  \shadeK{K  (8)} & \shadeK{$ \k^{\5}_{1236} + \k^{\5}_{1245} + \k^{\5}_{1345} + \k^{\5}_{1456}  + 2 \k^{\5}_{2345}  + \k^{\5}_{2346} + \k^{\5}_{2356} + 2 \k^{\5}_{123456} $ } 
\\  \hline 
$Q^{(5)}_5 $
 &  S   (22) &  \makecell{$ - 2\S_{12} - 2\S_{13} - \S_{14} - \S_{15} - \S_{23}  - 2\S_{24} - 2\S_{35}  - \S_{45}+ 3\S_{123} + 3\S_{124} + \S_{125} + \S_{134} $\\$  + 3\S_{135}+ \S_{145} + \S_{234} + \S_{235} + \S_{245} + \S_{345} - 2\S_{1234} - 2\S_{1235} - \S_{1245} - \S_{1345}$}   \\ 
  &  \shadeI{I  (10)} & \shadeI{$- \I_{123} -2 \I_{125} - 2 \I_{134} - \I_{145} - \I_{234} - \I_{235} + 2\I_{1234} +2 \I_{1235} + \I_{1245} + \I_{1345} $}  \\
 &  \shadeK{K  (10)} & \shadeK{$\k^{\5}_{1236} + \k^{\5}_{1245} + 2 \k^{\5}_{1256} + \k^{\5}_{1345} + 2 \k^{\5}_{1346} + \k^{\5}_{1456}  + 2 \k^{\5}_{2345}    + \k^{\5}_{2346} + \k^{\5}_{2356} + 6 \k^{\5}_{123456}$ } 
\\  \hline 
\end{tabular}
\end{center}
\caption{Sign-definite (non-negative) primitive information quantities for $\N=5$ in the S, I, and K bases.  The number of terms for each quantity is indicated in parentheses in the second column  and is summarized in  Table~\ref{termnumberN5}.}
\label{tab:summaryN5}
\end{table}

Next we examine the $\N=5$ sign-definite quantities.
For a 5-party system, the entropy vector now has 31 components, which in the K-basis corresponds to $\binom{6}{2} = 15$ Bell pairs, $\binom{6}{4} = 15$ PT$_4$'s, and $\binom{6}{6}=1$ PT$_6$ involving all parties. Analogous to \eqref{entropy3} and \eqref{entropy4}, we can express each $\S_\pI$ in terms of $\k^{(5)}_\gI$ for various $\gI \subseteq \{1,\ldots,6\}$, where $\gI$ has an even number of elements. 

The set of all sign-definite primitive information quantities for $\N=5$  \cite{Bao:2015bfa,Cuenca:2019uzx} (up to permutations and purifications) are listed in Table \ref{tab:summaryN5}.
Here too some of the expressions are somewhat simpler in the I-basis for different instances under purifications, particularly 
 MMI$_{(1,2,2)}$ and $Q_1^{(5)}$, as pointed out in \sref{s:infoQ}.
 
\section{Method of Contraction}\label{a:contraction}

In this section, we will prove a weaker form of Theorem~\ref{positivethm}, which states that in a holographic entropy inequality expressed in the K-basis, all the coefficients $\l_\gI$ in \eqref{thmeqn} are positive. First, we will delineate the notations and definitions that will be employed in the proof. We will then state and prove the main result. In both subsections, we will follow closely both the notation as well as the proof strategy employed in proving Theorem 8 in  \cite{Bao:2015bfa}.

\subsection{Conventions and Notation}

Consider a holographic entropy inequality involving $\N$ parties of the form\footnote{\ 
In our conventions, the purifying party does not appear in any of the terms.}
\begin{align}\label{genineq}
	\sum_{l=1}^L \a_lS(\pI_l) \geq \sum_{r=1}^R \b_rS(\pJ_r)\ ,
\end{align}
where $\a_1,\ldots,\a_L,\b_1,\ldots,\b_R > 0$ are positive coefficients and $\pI_1,\ldots,\pI_L,\pJ_1,\ldots,\pJ_R \subseteq \{1,\ldots,\N\}$ are the corresponding subsystems. We want to rewrite this entropy inequality in the K-basis. Recalling there are $\binom{\N+1}{2s}$ possible \PTt{2s}'s, we can label the various PTs by using the collective index $\Gamma$, which consists of even-numbered polychromatic subsystems in $[\N+1]$, where $n_\gI = 2s$ for a PT$_{2s}$. Then we can express the entropy $S(\pI_l)$ and $S(\pJ_r)$ in terms of the PTs in the following manner:
\begin{align}
\begin{split}
	S(\pI_l) &=  \sum_{\Gamma} |C_\G(\pI_l\cap\Gamma)|\k^\n_{\Gamma} \\
	 S(\pJ_r) &= \sum_{\Gamma} |C_\G(\pJ_r \cap\Gamma)|\k^\n_{\Gamma}\ ,
\end{split}
\end{align}
where $|C_\G(\CX)| \equiv \min(n_\CX,n_{\G\setminus\CX})$.\footnote{\
This notation is chosen since we can view the \PTt{2s} involving $\G$ as a star graph, where all $2s$ parties in $\G$ are boundary vertices that are joined together by $2s$ edges to a common central vertex (see Fig.~\ref{f:PTfig}). $C_\G(\CX)$ then denotes the minimal cut of $\CX$, which bipartitions the vertex set of graph such that $\CX$ is only in one partition and $\bar \CX$ is in the other, and the number of edges crossing between the two partitions is minimized. In the case where $\CX = \pI_l \cap\G$ or $\pJ_r\cap \G$, the magnitude of the minimal cut agrees with the definition of $|C_\G(\CX)|$ given in the text.} Substituting these linear combinations into \eqref{genineq}, we get the entropy inequality in terms of the K-basis:
\begin{align}
 \sum_{\Gamma} \left[ \sum_{l=1}^L  |C_\G(\pI_l\cap\Gamma)|\a_l  - \sum_{r=1}^R  |C_\G(\pJ_r \cap\Gamma)|\b_r \right] \k^\n_{\Gamma} \geq 0\ .
\end{align}
For instance, if $\N=3$ and \eqref{genineq} is MMI, then this procedure yields precisely the K-basis expression of MMI$_{(1,1,1)}$ given in Table~\ref{tab:summaryN3}. It follows then that the positivity of $\lambda_\gI$ given in \eqref{thmeqn} of Theorem~\ref{positivethm} is equivalent to the condition that
\begin{align}\label{ineq2}
	\sum_{l=1}^L  |C_\G(\pI_l\cap\Gamma)|\a_l  - \sum_{r=1}^R  |C_\G(\pJ_r \cap\Gamma)|\b_r \geq 0
\end{align}
for any valid entropy inequality. We will now prove this statement using the method of contraction.

\subsection{Proof via Contraction}

While we will not be able to prove Theorem~\ref{positivethm} in its fully general form as stated using the method of contraction, we will be able to show that \eqref{ineq2} is true for \emph{any holographic entropy inequality with a contraction map}. This contraction map was introduced in \cite{Bao:2015bfa}, and it was subsequently confirmed that in fact all known holographic entropy inequalities have a contraction map. If it turns out that {\it all} holographic entropy inequalities possess a contraction map, then the theorem below can be viewed as an alternative proof of Theorem~\ref{positivethm}. 

Before we can describe the contraction map, however, we need to introduce the notion of occurrence vectors, which is defined as
\begin{align}
\begin{split}
	\ovx{i} &\equiv (i \in \pI_l)_{l=1}^L \in \{0,1\}^L \\
	\ovy{i} &\equiv (i \in \pJ_r)_{r=1}^R \in \{0,1\}^R\ ,
\end{split}
\end{align}
where $i=1,\ldots,\N+1$. Note that in our conventions the purifying party does not appear in any of the terms in the entropy inequality \eqref{genineq}, so both $\ovx{\N+1}$ and $\ovy{\N+1}$ are the zero vector. Lastly, we define the weighted Hamming norm $\| \vec v\,\|_\a \equiv \sum_{l=1}^L \a_l|v_l|$, where $v_l$ denotes the $l$-th component of a vector $\vec v$. Similarly, we define $\|\vec v\,\|_\b \equiv \sum_{r=1}^R \b_r|v_r|$. We can finally now state the theorem we want to prove.
\begin{theorem}
Consider the holographic entropy inequality
\begin{align}
	\sum_{l=1}^L \a_lS(\pI_l) \geq \sum_{r=1}^R \b_rS(\pJ_r)\ .
\end{align}
We may encode the subsystems $\pI_l$ on the left-hand-side using $\ovx{1},\ldots,\ovx{\N+1}$ defined above, and the subsystems $\pJ_r$ on the right-hand-side using $\ovy{1},\ldots,\ovy{\N+1}$ defined above. Let $f:\{0,1\}^L \to \{0,1\}^R$ be a contraction map with respect to the weighted Hamming norm, i.e.
\begin{align}
	\left\|f(\vec x) - f(\vec x\,')\right\|_\b \leq \left\|\vec x - \vec x\,'\right\|_\a \quad \forall\ \vec x,\vec x\,' \in\{0,1\}^L\ .
\end{align}
If $f\big(\ovx{i}\big) = \ovy{i}$ for all $i=1,\ldots,\N+1$, then
\begin{align}\label{goal}
	\sum_{l=1}^L  |C_\G(\pI_l\cap\Gamma)|\a_l  \geq \sum_{r=1}^R  |C_\G(\pJ_r \cap\Gamma)|\b_r\ ,
\end{align}
where $|C(\CX)| \equiv \min(n_\CX,n_{\G\setminus\CX})$, and $\Gamma$ is the collective polychromatic index denoting the PTs.
\end{theorem}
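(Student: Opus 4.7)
The plan is to specialize the standard contraction proof to the particular graph realization of the PT$_{2s}$ associated to $\G$. The key observation is that $|C_\G(\pI_l\cap\G)|$ is precisely the minimum cut separating $\pI_l\cap\G$ from $\G\setminus\pI_l$ in the star graph with $2s$ unit-weight edges joining the boundary vertices $\{v_i\}_{i\in\G}$ to a central vertex $v_0$ (cf.\ Fig.~\ref{f:PTfig}). Hence \eqref{goal} is exactly the given entropy inequality evaluated on the PT entropy vector $\hat g^\G$, and the proof reduces to running the familiar cut-counting argument of \cite{Bao:2015bfa} on this single star graph, with the sole ``bulk'' degree of freedom being the color of $v_0$.

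First I would color each boundary vertex $v_i$ (for $i\in\G$) by $\ovx{i}\in\{0,1\}^L$ and treat the central color $c_0\in\{0,1\}^L$ as a free parameter. Because the cut functional and the weighted Hamming norm both decouple across bits, one computes
\[
\sum_{l=1}^L\a_l\,|C_\G(\pI_l\cap\G)|=\min_{c_0\in\{0,1\}^L}\sum_{i\in\G}\bigl\|\ovx{i}-c_0\bigr\|_\a,
\]
since for each bit $l$ the two candidate cut sizes $n_{\pI_l\cap\G}$ and $2s-n_{\pI_l\cap\G}$ can be minimized independently to produce $|C_\G(\pI_l\cap\G)|$. Fix a minimizer $c_0^*$.

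Next I would push this coloring through $f$, setting $c'_0\equiv f(c_0^*)$, and use $f(\ovx{i})=\ovy{i}$. The contraction hypothesis applied edge-by-edge to $v_i\!-\!v_0$ gives
\[
\sum_{i\in\G}\bigl\|\ovy{i}-c'_0\bigr\|_\b=\sum_{i\in\G}\bigl\|f(\ovx{i})-f(c_0^*)\bigr\|_\b\le\sum_{i\in\G}\bigl\|\ovx{i}-c_0^*\bigr\|_\a.
\]
Interchanging the order of summation on the left-hand side and observing that for each $r$ the resulting count of boundary vertices whose $r$-th color bit differs from $(c'_0)_r$ is a valid (not necessarily minimal) cut separating $\pJ_r\cap\G$ from $\G\setminus\pJ_r$ in the star graph, one concludes $\sum_{i\in\G}\|\ovy{i}-c'_0\|_\b\ge\sum_{r=1}^R\b_r\,|C_\G(\pJ_r\cap\G)|$. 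Chaining these three bounds yields \eqref{goal}.

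Most of the work is conceptual bookkeeping; the only step requiring explicit justification is the bitwise-independent optimization of $c_0$, which works because both the weighted Hamming norm and the star-graph cut functional are additive in the bits. The purifying party, if it happens to lie in $\G$, requires no separate treatment since $\ovx{\N+1}=\ovy{\N+1}=\vec 0$ is automatically preserved under $f$ by hypothesis. The only novelty relative to Theorem~8 of \cite{Bao:2015bfa} is that the general bulk manifold is replaced by the star graph associated to $\G$, with the single central vertex $v_0$ playing the role of the bulk degrees of freedom; everything else is a direct specialization.
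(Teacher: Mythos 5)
Your proposal is correct and follows essentially the same route as the paper's proof: your minimizer $c_0^*$ is exactly the paper's central-vertex occurrence vector $\ovx{\G}$, the edge-by-edge application of the contraction hypothesis is identical, and your ``any cut is at least the minimal cut'' step is the paper's case analysis on $f\big(\ovx{\G}\big)_r\in\{0,1\}$ phrased in graph language. No gaps.
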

\begin{proof}
For any fixed $\pI_l$, note that given any $i \in\Gamma$, $x^i_l = 1$ is equivalent to $i \in \pI_l$. Using this fact, a moment's thought yields
\begin{align}
	|C(\pI_l\cap \Gamma)| = \sum_{i \in\Gamma} \left|x^\i_l - x^{\{\G\}}_l\right|\ ,
\end{align}
where we defined $\ovx{\G} \in \{0,1\}^L$ to have $0$ in the $l$-th component if $n_{\pI_l \cap \Gamma} \leq s$ and 1 otherwise.\footnote{\
As in the previous footnote, our notation is again inspired by the star graph representation of the PT involving $\G$. In this graph, the central vertex lies in the minimal cut of $\pI_l \cap \G$ if and only if $n_{\pI_l \cap\G} > s$. In this sense, $\ovx{\G}$ can be viewed as the occurrence vector for the central vertex, with $x^{\{\G\}}_l = 1$ if and only if the central vertex lies in the minimal cut of $\pI_l \cap \G$.} Similarly, we have
\begin{align}
	|C(\pJ_r\cap \Gamma)| = \sum_{i \in \Gamma} \left|y^\i_r - y^{\{\G\}}_r\right|\ ,
\end{align}
where we defined $\ovy{\G} \in \{0,1\}^R$ to have $0$ in the $r$-th component if $n_{\pJ_r \cap\G} \leq s$ and $1$ otherwise. Putting everything together, we get
\begin{align}\label{ineq}
\begin{split}
	\sum_{l=1}^L |C(\pI_l\cap \Gamma)|\a_l &= \sum_{l=1}^L\sum_{i\in\Gamma} \a_l \left|x^\i_l - x^{\{\G\}}_l\right| = \sum_{i\in\Gamma} \left\|\ovx{i}-\ovx{\G}\right\|_\a \\
	&\geq \sum_{i\in\Gamma} \left\|f\big(\ovx{i}\big) - f\big(\ovx{\G}\big)\right\|_\b = \sum_{r=1}^R \b_r\sum_{i\in\Gamma} \left|y^\i_r - f\big(\ovx{\G}\big)_r\right|\ .
\end{split}
\end{align}
If $f\big(\ovx{\G}\big)_r = 1$, then $\sum_{i\in\Gamma}\big|y^\i_r - f\big(\ovx{\G}\big)_r\big| = 2s- n_{\pJ_r\cap\G}$, whereas if $f\big(\ovx{\G}\big)_r = 0$, then $\sum_{i\in\Gamma}\big|y^\i_r - f\big(\ovx{\G}\big)_r\big| = n_{\pJ_r\cap\G}$. In either case, this is greater than or equal to $|C(\pJ_r\cap \G)|$, so
\begin{align}
	\sum_{i\in\Gamma}\left|y^\i_r - f\big(\ovx{\G}\big)_r\right| \geq |C(\pJ_r\cap\Gamma)| = \sum_{i\in\Gamma}\left|y^\i_r - y^{\{\G\}}_r\right|\ .
\end{align}
Substituting this into \eqref{ineq}, we obtain
\begin{align}
\begin{split}
	\sum_{l=1}^L |C(\pI_l\cap \Gamma)|\a_l &\geq \sum_{r=1}^R \b_r\sum_{i\in\Gamma}\left|y^\i_r - y^{\{\G\}}_r\right| = \sum_{r=1}^R |C(\pJ_r\cap \Gamma)|\b_r\ ,
\end{split}
\end{align}
completing the proof.

\end{proof}


\providecommand{\href}[2]{#2}\begingroup\raggedright\endgroup

\end{document}